\DeclarePairedDelimiter{\ceil}{\lceil}{\rceil}
\DeclareMathOperator*{\argmax}{arg\,max}
\newtheorem{theorem}{Theorem}[section]
\newtheorem{proposition}[theorem]{Proposition}
\newtheorem{lemma}[theorem]{Lemma}
\theoremstyle{definition}
\newtheorem{example}[theorem]{Example}
\theoremstyle{remark}
\newtheorem{remark}[theorem]{Remark}
\title{The Probabilistic Foundations of Surveillance Failure:\\ From False Alerts to Structural Bias}
\author{Marco Pollanen\\
\small Department of Mathematics, Trent University\\
\small Peterborough, ON K9L 0G2, Canada\\
\small \texttt{marcopollanen@trentu.ca}}
\date{}
\begin{document}

\maketitle

\begin{abstract}
Forensic statisticians have long debated whether searching large DNA databases undermines evidential value of matches. Modern surveillance faces an exponentially harder problem: screening populations across thousands of attributes using threshold rules. Intuition suggests requiring many coincidental matches should make false alerts astronomically unlikely. This intuition fails. Consider a system monitoring 1,000 attributes, each with 0.5 percent innocent match rate. Matching 15 pre-specified attributes has probability $10^{-35}$, one in 30 decillion, effectively impossible. But operational systems may flag anyone matching \emph{any} 15 of the 1,000. In a city of one million innocents, this produces about 226 false alerts. A seemingly impossible event becomes guaranteed. This is a mathematical consequence of high-dimensional screening, not implementation failure. We identify fundamental probabilistic limits on screening reliability. Systems undergo sharp transitions from reliable to unreliable with small data scale increases, a fragility worsened by data growth and correlations.  As data accumulate and correlation collapses effective dimensionality, 
systems enter regimes where alerts lose evidential value even when individual coincidences remain vanishingly rare. This framework reframes the DNA database controversy as a regime shift. Unequal surveillance exposures magnify failure, making ``structural bias'' mathematically inevitable. Beyond a critical scale, failure cannot be prevented through threshold adjustment or algorithmic refinement.
\end{abstract}

\noindent \textbf{Keywords:} big data analytics; algorithmic fairness; surveillance systems; DNA databases; false positive rate; Bayesian inference; phase transitions; large deviations theory

\medskip
\noindent \textbf{MSC:} 60F10; 62P25; 60E15; 62P30; 94A17

\section{Introduction}

\subsection{From Database Search to Threshold Screening}

For over two decades, forensic statisticians have debated a fundamental question: does searching a DNA database of a million profiles weaken the evidential value of a match? Stockmarr \cite{stockmarr1999} argued that database searching increases coincidental match probability, while Balding \cite{balding2002} maintained that likelihood ratios preserve evidential weight regardless of search method. This controversy (concerning a \emph{single} database with \emph{exact} profile matching) remains unresolved \cite{storvik2006,kaye2013}.

Modern surveillance systems face an exponentially harder problem. Rather than searching one database for exact matches, they monitor populations across \emph{thousands} of attributes (locations visited, transactions made, communications sent) and flag individuals whose patterns match \emph{sufficiently many} criteria using a threshold rule rather than exact matching. The intuition that ``requiring many coincidences makes false matches astronomically unlikely'' fails catastrophically in this regime.

Consider a concrete example that reveals the nature of this failure. A surveillance system monitors $k=1000$ binary indicators, each with innocent match probability $p=0.005$ (half a percent chance), and flags individuals matching at least $m=15$ attributes. At first glance, this threshold seems extraordinarily conservative. Requiring an exact match across 15 \emph{specific} predetermined attributes would have probability $(0.005)^{15} \approx 3 \times 10^{-35}$, one in 30 decillion. Such exact-match odds suggest false alerts should be astronomically impossible, even across populations of billions.

But operational systems do not require exact matches across 15 predetermined attributes. Instead, they may flag individuals matching \emph{at least 15 of any 1000 attributes}. This seemingly subtle distinction changes everything.
With $\lambda = kp = 1000 \times 0.005 = 5$ expected chance matches per 
innocent person and threshold $m=15$ (three times the expected value), 
direct calculation gives a per-person false alert probability of
\begin{equation*}
q := \Pr(\text{Poisson}(5) \ge 15) \approx 2.26 \times 10^{-4} = 0.0226\%.
\end{equation*}
In a city of $n = 10^6$ innocents, this produces 
approximately $nq \approx 226$ false alerts, and the probability of at least one
false alert is
\[
\Pr(\text{at least one false alert})
  \approx 1 - e^{-226}
  \approx 1 - 7\times 10^{-99}.
\]
Thus the probability of \emph{no} false alerts is on the order of $10^{-99}$,
making the system, for all practical and mathematical purposes,
\emph{certain} to flag at least one innocent person.

The ``one in 30 decillion'' exact-match calculation is mathematically correct but operationally irrelevant. The threshold rule ``at least $m$ of $k$'' transforms the per-person probability from vanishingly small ($10^{-35}$) to operationally significant ($10^{-4}$), and population size amplifies this individual-level vulnerability into system-level certainty of failure. This is not a failure of implementation or algorithm design; it is a mathematical inevitability arising from the combinatorics of threshold detection in high-dimensional attribute spaces.

This contrast between the DNA setting and modern multi-attribute surveillance 
also clarifies why the long-standing DNA database controversy has persisted. 
Classical DNA searches operate in an extremely sparse false-match regime, 
where random matches are so rare that enlarging the database does not 
meaningfully dilute the evidential value of a hit.

Multi-attribute threshold screening, however, operates in a dense false-match 
regime where accumulated coincidences quickly overwhelm the signal. These are 
mathematically distinct operating regimes. Recognizing this distinction 
resolves the apparent conflict in the DNA literature and motivates the 
broader probabilistic limits developed in this paper.

\subsection{Motivation and Background}

Modern surveillance and screening systems collect vast amounts of data about individuals and attempt to identify rare targets of interest, such as potential terrorists, criminals, or fraudsters. These systems compare observed attributes---biometric features, transaction patterns, travel histories, communication metadata---against profiles or watchlists, flagging individuals whose patterns match suspiciously well \cite{lyon2003,oneil2016,schneier2015}.

The fundamental challenge is the \emph{curse of dimensionality in coincidence detection}~\cite{zimek2012}: as monitored attributes grow, innocent individuals increasingly match multiple criteria by chance. When populations are large (millions to billions), even extremely rare coincidences become statistically certain, generating overwhelming false alerts. This parallels challenges in cybersecurity \cite{axelsson2000,lippmann2000, cretu2008, sommer2010}, where even highly accurate intrusion detectors are swamped by false alarms at scale, and in open-set recognition \cite{scheirer2013}, where classifiers must reliably reject vast numbers of unfamiliar inputs.

\subsection{Related Work and Connections}

Our analysis builds on classical probability theory (Poisson approximation, 
Chernoff bounds, concentration inequalities) applied to modern surveillance 
contexts, situating this work within the framework of large-scale 
inference~\cite{efron2010}. The system-level bound $\Pr(\text{false alert}) \approx 1 - (1-q)^n$ 
corresponds to family-wise error rate (FWER) control in multiple 
testing~\cite{benjamini1995,storey2002,benjaminiyekutieli2001}, though our large-deviation analysis 
shows when such control becomes infeasible.

The base-rate analysis resolves the Stockmarr--Balding 
debate~\cite{stockmarr1999,balding2002} in forensic DNA statistics by 
identifying distinct asymptotic regimes. Connections to anomaly 
detection~\cite{axelsson2000,sommer2010}, open-set recognition~\cite{scheirer2013}, 
and algorithmic fairness~\cite{hardt2016,chouldechova2017,corbettdavies2017,
barocas2019,mehrabi2021} appear throughout: false-positive explosion is a fundamental 
challenge across all screening domains.

\subsection{Main Contributions}

This paper develops a rigorous probabilistic framework for analyzing such screening systems. Using classical tools (Poisson approximations, Chernoff bounds, concentration inequalities, and Bayesian inference), we characterize fundamental limits on system reliability. Our analysis reveals that the single-database DNA controversy and multi-attribute surveillance failure are manifestations of the same mathematical phenomenon, differing only in which asymptotic regime they occupy.

Our main contributions include:

\begin{enumerate}
\item \textbf{Sharp critical population bounds:} We derive two-sided bounds on both per-person and system-level false alert probabilities, showing the critical population grows as $n_{\text{crit}} \asymp \sqrt{\lambda}\exp(\lambda D(c\|1))$ with explicit $\sqrt{\lambda}$ corrections (Theorem~\ref{thm:critical_pop}). The rate function $D(c\|1) = c\log c - c + 1$ governs the exponential scaling.

\item \textbf{Finite system lifetimes:} Under exponential data growth $k(t) = k_0\gamma^t$, systems fail at time $T^* \approx \frac{1}{\log\gamma}\log(m/k_0p)$ when $\lambda(t)$ reaches threshold $m$, with population corrections secondary (Theorem~\ref{thm:system_lifetime}). 

\item \textbf{Unified Bayesian-frequentist view:} Both regimes exhibit the same exponential scaling $\exp(\lambda D(c\|1))$, but Bayesian actionability requires $n \lesssim \frac{1-\alpha}{\alpha}\,rs\sqrt{\lambda}e^{\lambda D}$ (for desired PPV $\ge \alpha$) while frequentist reliability allows $n \lesssim \sqrt{\lambda}e^{\lambda D}$ (Section~\ref{sec:bayesian}). Posterior probabilities collapse when $nq \gg rs$, resolving the Stockmarr-Balding controversy as a regime transition.

\item \textbf{Structural bias through group dominance:} Differential surveillance exposure creates exponential outcome disparities that cannot be eliminated through threshold adjustments (Section~\ref{sec:heterogeneous}), formalizing ``structural bias'' arguments from the algorithmic fairness literature.

\item \textbf{Effective dimensionality under correlation:} Spatiotemporal 
dependencies reduce effective attribute counts to 
$k_{\mathrm{eff}} \approx A/(2\pi\xi^2)$ (spatial) or 
$k_{\mathrm{eff}} \approx k/(2\tau)$ (temporal), connecting surveillance 
analysis to multiple-testing corrections in genomics and neuroimaging 
(Section~\ref{sec:spatialtemporal}; technical details in 
Appendix~\ref{app:correlation}).
\end{enumerate}

All main results are proved using elementary probability, with full technical
details collected in Appendix~\ref{app:proofs}. Appendix~\ref{app:empirical}
presents illustrative examples based on public datasets (UCI Adult Census and
Chicago Crime) that show how the phenomena described here arise with real data.

\medskip\noindent
\textbf{Notation.} The symbols above are defined precisely in later sections: 
$D(c\|1)$ is the Poisson rate function (Lemma~\ref{lem:poisson_upper}); 
$c > 1$ is the threshold multiplier with $m = \lceil c\lambda \rceil$ 
(Section~\ref{subsec:assumptions}); $\gamma$ and $k_0$ parameterize 
exponential data growth (Section~\ref{sec:temporal}); $r$, $s$, and $\alpha$ 
are Bayesian parameters (Section~\ref{sec:bayesian}); and $k_{\mathrm{eff}}$, 
$\xi$, $\tau$ quantify correlation effects (Section~\ref{sec:spatialtemporal}). 
A complete notation summary appears in Section~\ref{sec:notation}.

\section{Materials and Methods}

\subsection{The Basic Screening Model}

Consider a population of $n$ individuals, each of whom we collect $k$ types of data (e.g., locations visited, financial transactions, etc.). For each type of data, we also have lists of suspicious activity, for example, locations of interest where crimes have occurred. In general, we observe $k$ binary attributes (indicators) for each individual, where each attribute has a small probability $p \ll 1$ of matching some profile of interest purely by chance. 

For example, a simple model for the match probability $p$ on the $j$-th data type might be found by calculating the probability that a person's list (size $t$) overlaps the suspicious list (size $s$) on a domain of size $V$ possible distinct items (e.g., $V$ geo-location cells). With uniform sampling without replacement from the domain, the probability of an overlap comes from the zero-intersection hypergeometric probability:
\begin{equation*}
p=\Pr(\text{overlap}\ge1)=1-\frac{\binom{V-t}{s}}{\binom{V}{s}}
=1-\prod_{\ell=0}^{s-1}\left(1-\frac{t}{V-\ell}\right),
\quad \text{for } s \le V-t,
\end{equation*}
with $p=1$ when $s > V-t$ (guaranteed overlap).

More generally, the match probability $p$ may be derived from any domain-appropriate model; the hypergeometric calculation above serves merely as one concrete example. Our analysis holds for any fixed value of $p \ll 1$, regardless of how it is obtained. With this understanding, let $X_i$ denote the number of matching attributes for individual $i$. Under the null hypothesis (individual $i$ is innocent) and the modeling assumptions detailed below, we have:
\begin{equation*}
X_i \sim \text{Binomial}(k, p).
\end{equation*}
For large $k$ and small $p$, we write $\lambda = kp$ for the expected number of chance matches; this parameter will govern all subsequent results.

The screening system flags individual $i$ as suspicious if they match some unsolved crime on $m$ attributes with: $X_i \geq m$ for an integer threshold $m$. Throughout, we consider thresholds of the form $m = \lceil c\lambda \rceil$ for constant $c > 1$, representing detection criteria set above the expected number of chance matches; rounding affects only constant factors. The \emph{per-person false alert probability} is:
\begin{equation*}
q = \Pr(X_i \geq m \mid \text{innocent}).
\end{equation*}

Since we screen $n$ individuals, the \emph{system-level false alert probability}, that is, the probability that at least one innocent person is flagged, is:
\begin{equation*}
\Pr(\text{at least one false alert}) = 1 - (1-q)^n.
\end{equation*}

For small $q$, this is commonly approximated by:
\begin{equation*}
\Pr(\text{at least one false alert}) \approx 1 - e^{-nq}.
\end{equation*}

The system becomes \emph{unreliable} when this probability approaches 1 (that is, when $nq \gtrsim 1$), in which case the system is near certain to flag innocent individuals as suspects. 

\subsection{Modeling Assumptions and Regime of Validity}
\label{subsec:assumptions}

Throughout this paper, we operate under the following assumptions:

\begin{enumerate}
\item \textbf{Large-$k$, small-$p$ regime:} We assume $k \gg 1$ (many attributes), $p \ll 1$ (rare individual matches), with mean $\lambda = kp$ held moderate and fixed as $k \to \infty$. This is the natural Poisson regime for rare-event detection, where the binomial distribution converges to Poisson$(\lambda)$ with total-variation error bounded by $2\sum_{i=1}^k p_i^2$ (see Section~\ref{sec:fundamental}).

\item \textbf{Independence across individuals:} The match indicators for different individuals are statistically independent (Remark~\ref{rem:independence}). In practice, common-mode events (e.g., large gatherings or seasonal shifts) may introduce positive dependence, which would only increase false alert rates beyond our bounds.

\item \textbf{Homogeneous match probability (baseline model):} For analytical tractability, we initially assume all attributes have the same match probability $p$. Section~\ref{sec:heterogeneous} extends to heterogeneous populations with different exposure rates $p_g$ across demographic groups. Remark~\ref{rem:heterogeneous_attributes} addresses heterogeneity across attributes within individuals.

\item \textbf{Binary attributes:} Each attribute either matches ($M = 1$) or doesn't match ($M = 0$) a profile of interest. This models presence/absence at locations, yes/no transaction patterns, or binarized continuous features.

\item \textbf{Fixed threshold screening:} The system flags individuals with $m$ or more matches, where $m$ is predetermined. We primarily analyze $m = c\lambda$ for constant $c > 1$, representing thresholds scaled to exceed the expected number of chance matches under the null hypothesis.

\item \textbf{Null hypothesis analysis:} We analyze false alerts under the assumption that all $n$ individuals are innocent (the null hypothesis). Detection of actual targets requires separate treatment via signal detection theory and receiver operating characteristic (ROC) analysis \cite{fawcett2006}, which is not addressed here. Our results characterize the false positive rate; in practice, system designers must balance this against detection power for true targets.
\end{enumerate}

These assumptions define the scope of our analysis. The core results 
(Sections~\ref{sec:fundamental} through~\ref{sec:temporal}) establish 
fundamental limits under these idealized conditions. Later sections extend 
the framework: Section~\ref{sec:heterogeneous} analyzes heterogeneous 
populations with differential exposure, and 
Section~\ref{sec:spatialtemporal} summarizes how spatiotemporal correlation 
reduces effective dimensionality (with technical details in 
Appendix~\ref{app:correlation}).

The large-$k$, small-$p$ regime is particularly natural for modern surveillance
systems, where technological advances enable collection of thousands of
attributes per person (GPS pings, transaction records, communication events).
In our model, each attribute contributes a \emph{binary match indicator}
(\emph{match} or \emph{no match}) of whether it aligns with a profile of
interest, and each such match remains rare. For instance, with $k = 1000$ location cells and $p = 0.005$ (half-percent chance of presence at any flagged location), we have $\lambda = 5$ expected false matches, a moderate value that falls squarely within our analytical framework.

\section{Results}

\subsection{Fundamental Probabilistic Limits}
\label{sec:fundamental}

\subsubsection{The Poisson Approximation}
\label{subsec:poisson_approx}

When $k$ is large, $p$ is small, and $\lambda = kp$ is \emph{fixed}, the
$\mathrm{Binomial}(k,p)$ distribution can be approximated by $\mathrm{Poisson}(\lambda)$ \cite{feller1968}.
Specifically, if $X \sim \mathrm{Binomial}(k,p)$, then
\begin{equation*}
\Pr(X = j) \approx e^{-\lambda}\frac{\lambda^j}{j!}, \qquad \lambda = kp.
\end{equation*}
Moreover, by Le Cam's inequality \cite{barbour1992},
\begin{equation*}
d_{\mathrm{TV}}\!\left(\mathrm{Binomial}(k,p),\mathrm{Poisson}(\lambda)\right) \le 2\sum_{i=1}^k p_i^2.
\end{equation*}
For the homogeneous case with $p_i = p$ for all $i$, this gives $d_{\mathrm{TV}} \le 2kp^2 = 2\lambda p$. When $p$ is small and $\lambda = kp$ is moderate, this bound is $O(p)$, making the Poisson approximation numerically safe for practical applications.
Throughout, we assume independence across individuals; common-mode shocks would require extensions via positively associated random variables (see Remark~\ref{rem:independence}).

Under this approximation, the per-person false alert probability (tail) is
\begin{equation*}
q \;=\; \Pr(X \ge m) \;\approx\; \sum_{j=m}^{\infty} e^{-\lambda}\frac{\lambda^j}{j!}
\;=\; 1 - \sum_{j=0}^{m-1} e^{-\lambda}\frac{\lambda^j}{j!}.
\end{equation*}

\begin{remark}[Heterogeneous Attribute Probabilities]
\label{rem:heterogeneous_attributes}
For analytical tractability we assume a homogeneous match probability $p$ across
attributes, but this assumption can be relaxed. Let the attribute-level match
probabilities be $p_1,\ldots,p_k$, and define
\[
    X = \sum_{i=1}^k \mathrm{Bernoulli}(p_i), 
    \qquad
    \lambda = \sum_{i=1}^k p_i .
\]
By Le~Cam's theorem \cite{barbour1992}, the total variation distance between the
distribution of $X$ and a Poisson distribution with mean $\lambda$ satisfies
\[
    d_{\mathrm{TV}}\!\left( X, \mathrm{Poisson}(\lambda) \right)
    \le 2 \sum_{i=1}^k p_i^2.
\]
Thus the Poisson approximation remains accurate whenever all $p_i$ lie in the
small-probability regime, even if the $p_i$ differ substantially from one
another. In this heterogeneous setting, the distribution of $X$, and therefore
all results in this section, depends on the collection $\{p_i\}$ only through
the aggregate rate~$\lambda$, up to an approximation error of order
$\sum_i p_i^2$.

Group-level heterogeneity (e.g., differing $p_g$ across demographic groups) is
treated separately in Section~\ref{sec:heterogeneous}.
\end{remark}

\begin{remark}[Independence Across Individuals]\label{rem:independence}
Since we screen $n$ individuals independently, the probability that at least one innocent person is flagged is:
   \begin{equation*}
   \Pr(\text{at least one false alert}) = 1 - (1-q)^n.
   \end{equation*}
In real deployments, common-mode events may introduce positive dependence 
among individuals. Positive association means 
$\Pr(X_i \geq x_i \text{ for all } i) \geq \prod_i \Pr(X_i \geq x_i)$: 
variables are more likely to be jointly large than under independence. 
This increases the system-level false-alert probability beyond our bounds.

Our analysis assumes independence. Extensions to positively associated 
arrays via Chen--Stein or Stein's method~\cite{barbour2005} are possible 
but omitted here. The key implication is that positive dependence increases 
tail probabilities, so our independence-based bounds \emph{underestimate} 
true false alert rates when correlation is present. Thus, the independence-based bounds are conservative.
\end{remark}

\subsubsection{Tail Bounds for the Poisson Distribution}

To understand how $q$ depends on $m$ and $\lambda$, we use Chernoff bounds and related concentration inequalities \cite{hoeffding1963,mitzenmacher2005} for Poisson random variables.

\begin{lemma}[Poisson Upper Tail]
\label{lem:poisson_upper}
Let $Y \sim \mathrm{Poisson}(\lambda)$ and suppose $m>\lambda$. Then
\begin{equation}
\Pr(Y \ge m) \;\le\; \left(\frac{e\lambda}{m}\right)^{\!m} e^{-\lambda}
\;=\; \exp\!\Big(-\lambda\,D\!\left(\frac{m}{\lambda}\middle\|1\right)\Big),
\label{eq:poisson_upper_bound}
\end{equation}
where $D(\alpha\|1) = \alpha\log \alpha - \alpha + 1$ is the Poisson rate function.
\end{lemma}

\begin{remark}[Rate Function Properties]
The function $D(\alpha\|1)$ is also called the Cram\'er rate function or 
Cram\'er transform of the Poisson distribution with unit mean. This differs 
from the binary Kullback--Leibler divergence 
$D_{\mathrm{KL}}(p\|q) = p\log(p/q) + (1-p)\log((1-p)/(1-q))$ used for 
Bernoulli random variables.

The large-deviation rate is asymptotically exact for $m/\lambda > 1$: the 
tail probability satisfies 
$q \asymp \lambda^{-1/2}e^{-\lambda D(m/\lambda\|1)}$, with factorial 
discreteness contributing only the $\lambda^{-1/2}$ subexponential prefactor. 
We write $D(c\|1)$ or simply $D$ for brevity when the argument is clear.
\end{remark}

\begin{proof}
For any $t>0$,
\begin{align*}
\Pr(Y \ge m)
&= \Pr\!\big(e^{tY} \ge e^{tm}\big)
\le \frac{\mathbb{E}[e^{tY}]}{e^{tm}}
= \exp\!\big(\lambda(e^t-1) - tm\big).
\end{align*}
Optimizing by setting $\frac{d}{dt}[\lambda (e^t -1) - tm]=0$ gives $e^{t^*}=m/\lambda$,
which yields the stated bound.
\end{proof}

\subsubsection{Critical Population Size}

We now establish the fundamental scale of population size for reliable screening when the decision threshold scales as $m=c\lambda$ with $c>1$.

\begin{theorem}[Critical Population Scale]
\label{thm:critical_pop}
Consider a screening system with $k$ binary attributes, per-attribute match probability $p$, threshold $m=\ceil{c\lambda}$ for some $c>1$, population size $n$, and $\lambda=kp$.
Let $D(c\|1)=c\log c - c + 1$ (the Poisson rate function from Lemma~\ref{lem:poisson_upper}).
Then the per-person false alert probability $q$ satisfies the following asymptotic bounds:
\begin{equation}
\frac{1}{\sqrt{2\pi c\lambda}}\,\exp\!\big(-\lambda D(c\|1) - \tfrac{1}{12c\lambda} \big)
\;\le\;
q
\;\le\;
\exp\!\big(-\lambda D(c\|1)\big),
\label{eq:poisson-tail-two-sided}
\end{equation}
which hold for $\lambda \ge 1$ and $c > 1$ with the stated constants; for smaller $\lambda$ the same exponential form holds with different absolute constants.
The system-level false alert probability obeys
\begin{equation}
1 - \exp\!\Big(
  -\,\frac{n}{\sqrt{2\pi c\lambda}}\,
     e^{-\lambda D(c\|1) - \tfrac{1}{12c\lambda}}
     \Big)
\;\le\;
\Pr(\text{false alert})
\;\le\;
1 - \exp\!\Big(
  -\,\frac{n\,e^{-\lambda D(c\|1)}}{1 - e^{-\lambda D(c\|1)}}
  \Big).
\label{eq:system-bounds}
\end{equation}
(For $e^{-\lambda D} \le 1/2$, the upper bound denominator is within a factor of 2 in the exponent, providing tight control.)
In particular, the system becomes unreliable once
\begin{equation*}
n \;\gtrsim\; \sqrt{\lambda}\;\exp\!\big(\lambda D(c\|1)\big)
\qquad\text{(up to constant factors),}
\end{equation*}
and the critical population scale is
\begin{equation}
n_{\mathrm{crit}} \asymp \exp\!\big(\lambda D(c\|1) \big)=\exp\!\big(kp\,D(c\|1)\big),
\label{eq:ncrit_main}
\end{equation}
ignoring subexponential $\sqrt{\lambda}$ corrections.
Here we use the notation $f \asymp g$ to denote equality up to multiplicative constants independent of the main asymptotic parameters ($\lambda$, $k$, $n$).
\end{theorem}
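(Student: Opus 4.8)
I would reduce the entire statement to a two-sided estimate of the Poisson tail $\Pr(\mathrm{Poisson}(\lambda)\ge m)$ and then propagate it through the identity $\Pr(\text{false alert})=1-(1-q)^n$. By the Le Cam bound of Section~\ref{subsec:poisson_approx}, $|q-\Pr(\mathrm{Poisson}(\lambda)\ge m)|\le 2\lambda p=2\lambda^2/k$, which is negligible against the tail in the large-$k$, moderate-$\lambda$ regime of Section~\ref{subsec:assumptions} (the tail stays bounded away from $0$ for fixed $\lambda$, while $2\lambda^2/k\to 0$). So it suffices to prove \eqref{eq:poisson-tail-two-sided} with $q$ the Poisson tail; I write $Y\sim\mathrm{Poisson}(\lambda)$ and, to keep constants clean, treat $m=c\lambda$ as an integer, noting that the ceiling in $m=\ceil{c\lambda}$ shifts the exponent $\lambda D(m/\lambda\|1)$ by at most $(\log c)/\lambda+O(1/\lambda^2)$, i.e.\ by an $O(1)$ multiplicative factor, covered by the ``different absolute constants'' clause.

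For the upper bound in \eqref{eq:poisson-tail-two-sided}, Lemma~\ref{lem:poisson_upper} gives $\Pr(Y\ge m)\le\exp(-\lambda D(m/\lambda\|1))$; since $D(\alpha\|1)=\alpha\log\alpha-\alpha+1$ has derivative $\log\alpha>0$ for $\alpha>1$, it is increasing on $(1,\infty)$, and $m/\lambda\ge c>1$ yields $\Pr(Y\ge m)\le\exp(-\lambda D(c\|1))$. For the lower bound, keep only the leading term: $\Pr(Y\ge m)\ge\Pr(Y=m)=e^{-\lambda}\lambda^m/m!$. This loses only a constant factor $\tfrac{c}{c-1}+o(1)$, since consecutive Poisson masses above $\lambda$ decay geometrically with ratio $\le\lambda/(m+1)$, so the tail is genuinely of the order of its first term. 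Applying the upper Stirling bound $m!\le\sqrt{2\pi m}\,(m/e)^m e^{1/(12m)}$ and rearranging $(e\lambda/m)^m e^{-\lambda}=\exp(-\lambda D(m/\lambda\|1))$, with $m=c\lambda$ this is exactly $\tfrac{1}{\sqrt{2\pi c\lambda}}\,e^{-1/(12c\lambda)}e^{-\lambda D(c\|1)}$, the claimed lower bound.

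The system-level inequalities \eqref{eq:system-bounds} then follow from $\Pr(\text{false alert})=1-(1-q)^n$ and two monotone estimates. For the lower bound, $1-q\le e^{-q}$ gives $(1-q)^n\le e^{-nq}\le e^{-nq_{\mathrm{low}}}$, where $q_{\mathrm{low}}$ is the left side of \eqref{eq:poisson-tail-two-sided}. For the upper bound, the series $\log(1-q)=-\sum_{j\ge 1}q^j/j\ge-\sum_{j\ge1}q^j=-q/(1-q)$ gives $(1-q)^n\ge\exp(-nq/(1-q))$; since $q\le q_{\mathrm{up}}:=e^{-\lambda D(c\|1)}<1$ and $t\mapsto t/(1-t)$ is increasing, $q/(1-q)\le q_{\mathrm{up}}/(1-q_{\mathrm{up}})$, which yields the stated bound (and the parenthetical factor-of-$2$ remark when $q_{\mathrm{up}}\le 1/2$).

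For the critical scale, the two-sided bound on $q$, sharpened by Lemma~\ref{lem:poisson_upper} to $q\asymp\lambda^{-1/2}e^{-\lambda D(c\|1)}$, gives $nq\asymp n\lambda^{-1/2}e^{-\lambda D(c\|1)}$, so the reliability boundary $nq\asymp 1$ sits at $n\asymp\sqrt\lambda\,e^{\lambda D(c\|1)}$; discarding the subexponential $\sqrt\lambda$ leaves $n_{\mathrm{crit}}\asymp e^{\lambda D(c\|1)}=e^{kp\,D(c\|1)}$. Concretely, $n\le\varepsilon e^{\lambda D(c\|1)}$ forces $nq\le\varepsilon$ via the upper bound on $q$, hence $\Pr(\text{false alert})\le 1-e^{-\varepsilon}$; while $n\ge C\sqrt\lambda\,e^{\lambda D(c\|1)}$ forces $nq$ above a positive constant times $C$ via the lower bound on $q$, hence $\Pr(\text{false alert})\to 1$ as $C\to\infty$, pinning the transition. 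The main obstacle is the lower Poisson-tail bound: one must check that retaining only $\Pr(Y=m)$ is not wasteful (the geometric-decay argument), and that the ceiling operation and the Le Cam approximation error are genuinely lower-order, so that the one explicit Stirling constant $\tfrac{1}{\sqrt{2\pi c\lambda}}$ survives intact; the remaining work is routine manipulation of $1-x\le e^{-x}$ and $\log(1-x)\ge -x/(1-x)$.
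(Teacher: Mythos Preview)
Your proof is correct and follows essentially the same route as the paper: Lemma~\ref{lem:poisson_upper} for the upper tail, Robbins' Stirling bound applied to $\Pr(Y=m)$ for the lower tail, and the pair $\log(1-q)\le -q$, $\log(1-q)\ge -q/(1-q)$ propagated through $1-(1-q)^n$ for the system-level bounds. Your extra care with the Le~Cam error, the ceiling via monotonicity of $D(\cdot\|1)$, and the geometric-decay remark are refinements rather than a different approach.
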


\begin{proof}[Proof sketch]
The upper bound follows directly from Lemma~\ref{lem:poisson_upper} with 
$m = c\lambda$, giving $q \le \exp(-\lambda D(c\|1))$. The lower bound uses 
Robbins' refinement of Stirling's formula, which shows that 
$\Pr(Y = m)$ has the same exponential rate $D(c\|1)$ up to a subexponential 
$\sqrt{\lambda}$ factor. Because $q = \Pr(Y \ge m) \ge \Pr(Y = m)$, these 
bounds yield matching exponential rates for $q$.

The system-level inequalities \eqref{eq:system-bounds} follow from 
$e^{-nx/(1-x)} \le (1-x)^n \le e^{-nx}$ for $x \in (0,1)$. The critical 
population scale arises when $nq \sim 1$, giving 
$n_{\mathrm{crit}} \approx \sqrt{\lambda}\, e^{\lambda D(c\|1)}$, and hence, 
up to subexponential factors, $n_{\mathrm{crit}} \asymp e^{\lambda D(c\|1)}$ 
as stated. Full derivations appear in Appendix~\ref{app:proofs}.
\end{proof}

\subsubsection{Sharp Threshold Behavior}

The relationship between population size and system reliability exhibits an increasingly sharp threshold as the number of attributes grows.

\begin{proposition}[Threshold Sharpness]
\label{prop:sharp_threshold}
Consider a screening system with $\lambda = kp$, threshold $m = c\lambda$ for fixed $c > 1$, and population scaled as $n = \sqrt{\lambda} \cdot e^{\alpha \lambda D(c\|1)}$ for $\alpha > 0$. Then the system-level false alert probability satisfies:
\begin{equation}
\lim_{\lambda \to \infty} \Pr(\text{false alert}) = 
\begin{cases}
0 & \text{if } \alpha < 1, \\
1 & \text{if } \alpha > 1.
\end{cases}
\label{eq:sharp_threshold}
\end{equation}

The transition occurs in a window $\Delta\alpha \asymp 1/(\lambda D(c\|1)) \to 0$, becoming arbitrarily sharp as $\lambda$ increases; this follows from standard large-deviation theory for sums of i.i.d.\ rare events \cite{dembo2010}. (Note: $\alpha$ here denotes the population scaling exponent, distinct from the posterior probability threshold $\alpha$ in Section~\ref{sec:bayesian}.)
\end{proposition}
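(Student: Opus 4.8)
The plan is to derive both cases as a direct corollary of Theorem~\ref{thm:critical_pop}: since the system-level probability $1-(1-q)^n$ is a monotone function of the single quantity $nq$, it suffices to show $nq \to 0$ when $\alpha < 1$ and $nq \to \infty$ when $\alpha > 1$, then compare exponential against polynomial growth in $\lambda$. First I would substitute the scaling $n = \sqrt{\lambda}\,e^{\alpha\lambda D(c\|1)}$ into the two-sided tail bound \eqref{eq:poisson-tail-two-sided}. Multiplying the upper estimate $q \le e^{-\lambda D(c\|1)}$ by $n$ gives
\[
nq \;\le\; \sqrt{\lambda}\; e^{-(1-\alpha)\lambda D(c\|1)},
\]
and multiplying the lower estimate $q \ge (2\pi c\lambda)^{-1/2} e^{-\lambda D(c\|1) - 1/(12c\lambda)}$ by $n$ gives
\[
nq \;\ge\; \frac{1}{\sqrt{2\pi c}}\; e^{(\alpha-1)\lambda D(c\|1) - 1/(12c\lambda)}.
\]
Here I would record that $D(c\|1) > 0$ strictly for every $c > 1$ (its derivative in the first argument is $\log c$, which vanishes only at $c = 1$, where $D = 0$), so both bounds are genuinely exponential in $\lambda$.

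For $\alpha < 1$ the upper estimate gives $nq \le \sqrt{\lambda}\, e^{-(1-\alpha)\lambda D(c\|1)} \to 0$, since the exponential decay dominates the $\sqrt{\lambda}$ prefactor; combined with the elementary Bernoulli bound $1-(1-q)^n \le nq$ this forces $\Pr(\text{false alert}) \to 0$. For $\alpha > 1$ the lower estimate gives $nq \ge (2\pi c)^{-1/2} e^{(\alpha-1)\lambda D(c\|1) - 1/(12c\lambda)} \to \infty$; using $1-(1-q)^n \ge 1 - e^{-nq}$ (from $(1-q)^n \le e^{-nq}$, already invoked in the proof of Theorem~\ref{thm:critical_pop}) then yields $\Pr(\text{false alert}) \ge 1 - e^{-nq} \to 1$. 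This establishes \eqref{eq:sharp_threshold}.

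For the transition-window claim I would reparametrize $\alpha = 1 + \beta/(\lambda D(c\|1))$, so that $e^{(\alpha-1)\lambda D(c\|1)} = e^{\beta}$. The two bounds above then read $(2\pi c)^{-1/2} e^{\beta - 1/(12c\lambda)} \le nq \le \sqrt{\lambda}\, e^{\beta}$, so as $\beta$ ranges over an interval of length $O(\log\lambda)$ the value of $nq$ — and hence $\Pr(\text{false alert}) = 1 - e^{-\Theta(nq)}$ — sweeps from near $0$ to near $1$; translating back, the $\alpha$-window has width $O(\log\lambda / (\lambda D(c\|1))) \to 0$, which already proves sharpness qualitatively. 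To recover the stated width $\Delta\alpha \sim 1/(\lambda D(c\|1))$ I would replace the Chernoff bound by the exact local Poisson tail asymptotic $q \sim \kappa(c)\,\lambda^{-1/2}\,e^{-\lambda D(c\|1)}$ with $\kappa(c) = \tfrac{c}{(c-1)\sqrt{2\pi c}}$ (the standard refinement behind the $\lambda^{-1/2}$ prefactor flagged in Lemma~\ref{lem:poisson_upper}: the tail is asymptotically geometric with ratio $\lambda/(m+1) \to 1/c$, and Stirling applied to the leading term $e^{-\lambda}\lambda^m/m!$ supplies the prefactor). This removes the residual $\sqrt{\lambda}$ slack and gives $nq \to \kappa(c)\,e^{\beta}$, hence $\Pr(\text{false alert}) \to 1 - \exp(-\kappa(c) e^{\beta})$, a fixed strictly increasing function of $\beta$; an $O(1)$ change in $\beta$ therefore carries the limit across all of $(0,1)$, i.e.\ $\Delta\alpha = \Theta(1/(\lambda D(c\|1)))$.

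The only real obstacle, if one insists on using nothing beyond what is already proved, is this last step: the stated two-sided bounds bracket $q$ only up to a $\sqrt{\lambda}$ factor, which suffices for the dichotomy between $\alpha < 1$ and $\alpha > 1$ but not for the precise window width; closing that gap requires the (entirely standard) sharp local asymptotic for the Poisson tail rather than just the Chernoff bound of Lemma~\ref{lem:poisson_upper}. Everything else is a one-line substitution into Theorem~\ref{thm:critical_pop} plus ``exponential beats polynomial,'' together with the two elementary inequalities $1-(1-q)^n \le nq$ and $1-(1-q)^n \ge 1-e^{-nq}$; integer rounding of $n$ is immaterial, since it alters $nq$ by at most a bounded factor and does not affect the limit.
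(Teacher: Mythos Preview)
Your proof is correct and follows essentially the same route as the paper's: both substitute the two-sided tail bounds of Theorem~\ref{thm:critical_pop} into $nq$, conclude $nq \to 0$ or $nq \to \infty$ according as $\alpha \lessgtr 1$, and translate this back through $1-(1-q)^n$. You are in fact more careful than the paper on the transition-width claim: the paper asserts $\Delta\alpha = O(1/(\lambda D))$ directly from $\log(nq) = (\alpha-1)\lambda D + O(\log\lambda)$, which strictly speaking only yields $\Delta\alpha = O(\log\lambda/(\lambda D))$; you correctly flag this $\sqrt{\lambda}$ slack and close it by invoking the sharp local asymptotic $q \sim \kappa(c)\lambda^{-1/2}e^{-\lambda D}$.
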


\begin{proof}[Proof sketch]
From Theorem~\ref{thm:critical_pop}, $\log q = -\lambda D(c\|1) + O(\log\lambda)$. 
With $n = \sqrt{\lambda}\,e^{\alpha\lambda D(c\|1)}$, we obtain
\[
\log(nq) = (\alpha-1)\lambda D(c\|1) + O(\log\lambda).
\]
For $\alpha < 1$, $nq \to 0$ exponentially; for $\alpha > 1$, $nq \to \infty$ 
exponentially. The transition width $\Delta\alpha = O(1/(\lambda D)) \to 0$. 
Full details appear in Appendix~\ref{app:proof_sharp_threshold}.
\end{proof}

\begin{remark}
This sharp threshold implies no gradual degradation: systems operating near the 
critical line $\log n = \lambda D(c\|1)$ transition from reliable to unreliable 
essentially instantaneously as $\lambda$ increases (Figure~\ref{fig:phase_transitions}). 
Combined with exponential data growth $\lambda(t) = \lambda_0 \gamma^t$
with 
$\lambda_0 = k_0 p$, every system inevitably crosses this threshold at a finite time
\[
T^* \sim \frac{\log(m/\lambda_0)}{\log \gamma}
= \frac{1}{\log \gamma} \log\!\left(\frac{m}{k_0 p}\right)
\]
(Theorem~\ref{thm:system_lifetime}), making failure predictable rather than merely 
possible.
\end{remark}

\subsubsection{Numerical Validation}

Monte Carlo simulations (5000 runs per data point) validate the Poisson 
approximation and critical population predictions as $k$ increases. 
Figure~\ref{fig:phase_transitions}(a) compares simulated false-alert rates 
(orange points) to theoretical predictions, showing mean absolute error of 
0.0015 across the critical transition region. The sharp threshold behavior 
predicted by Proposition~\ref{prop:sharp_threshold} is clearly evident in 
the simulation data.

The accuracy of the Poisson approximation across varying $(k,p)$ combinations 
follows from Le Cam's inequality (Section~\ref{subsec:poisson_approx}). In the 
homogeneous case $p_i = p$ used in this section, Le Cam's bound simplifies to
\[
    d_{\mathrm{TV}} \le 2 k p^2 = 2 \lambda p,
\]
where $\lambda = kp$. Table~\ref{tab:poisson_error} compares this theoretical 
bound against the actual total variation distance between Binomial$(k,p)$ and 
Poisson$(\lambda)$ distributions across parameter combinations spanning the 
operating regime of interest.

\begin{table}[htbp]
\caption{Poisson approximation error: theoretical bound versus actual total 
variation distance for representative $(k,p)$ combinations.}
\label{tab:poisson_error}
\centering
\begin{tabular}{ccccc}
\toprule
$k$ & $p$ & $\lambda = kp$ & Le Cam Bound & Actual $d_{\mathrm{TV}}$ \\
\midrule
50    & 0.10   & 5.0  & 1.000  & 0.026 \\
100   & 0.05   & 5.0  & 0.500  & 0.013 \\
500   & 0.01   & 5.0  & 0.100  & 0.0025 \\
1000  & 0.005  & 5.0  & 0.050  & 0.0012 \\
5000  & 0.001  & 5.0  & 0.010  & 0.0002 \\
\midrule
100   & 0.10   & 10.0 & 2.000  & 0.026 \\
1000  & 0.01   & 10.0 & 0.200  & 0.0025 \\
\bottomrule
\end{tabular}
\end{table}

The Le Cam bound is conservative by a factor of 17--40 in these regimes; actual 
approximation error is substantially smaller than the bound suggests. For the 
primary parameters used throughout this paper ($k = 1000$, $p = 0.005$, 
$\lambda = 5$), the actual total variation distance is approximately $0.001$, 
indicating that the Poisson approximation introduces negligible error relative 
to the phase transition effects we characterize. Even in less favorable regimes 
($k = 50$, $p = 0.1$), the approximation error remains below $0.03$, confirming 
that the Poisson model is appropriate across the full range of parameters 
considered.

Additional numerical illustrations across a broader range of parameters, as
well as examples based on real-world datasets, are given in
Appendix~\ref{app:empirical}.

\medskip\noindent\textbf{Key Takeaway.}
The per-person false alert probability decays as 
$ q \sim e^{-\lambda D(c\|1)} $, and the system becomes unreliable once the 
population size exceeds 
$ n_{\mathrm{crit}} \sim \sqrt{\lambda}\, e^{\lambda D(c\|1)} $. 
This behavior produces a sharp phase transition: systems move abruptly from 
reliable to unreliable operation, and the sharpness of the transition increases 
monotonically with~$\lambda$.

\subsection{Temporal Dynamics: Finite System Lifetimes}
\label{sec:temporal}

Real surveillance systems accumulate data over time. As more attributes are collected, 
the false alert problem intensifies until the system becomes unreliable. We now quantify 
how long a screening system can operate before this inevitable failure occurs. As 
Figure~\ref{fig:phase_transitions}(c) demonstrates, systems with exponential data growth 
exhibit sharp temporal transitions from reliable to unreliable operation at predictable 
times.

\begin{remark}[Change of Threshold Regime]
In the preceding analysis (Section~\ref{sec:fundamental}), we studied thresholds of the form $m = c\lambda$ with $c > 1$ constant, where both $m$ and $\lambda$ scale together. In this section, we adopt a different perspective: we fix the threshold $m$ at deployment time while allowing $\lambda(t) = k(t) \cdot p$ to grow over time as data accumulates. This models realistic system operation, where detection thresholds are predetermined policy parameters that remain constant even as surveillance capabilities expand. The critical behavior occurs when the growing $\lambda(t)$ crosses the fixed threshold $m$, at which point false alerts become inevitable.
\end{remark}

\subsubsection{Exponential Data Growth}

Modern data collection exhibits exponential growth \cite{hilbert2011,reinsel2018}. We model this as:
\begin{equation}
k(t) = k_0 \gamma^t,
\label{eq:data_growth}
\end{equation}
where $k_0$ is the initial number of monitored attributes at time $t=0$, and $\gamma > 1$ is the growth factor per time unit (e.g., annual growth rate).

Since each attribute has probability $p$ of matching by chance for an innocent individual, the expected number of false matches grows exponentially:
\begin{equation}
\lambda(t) = k(t) \cdot p = k_0 p \cdot \gamma^t.
\label{eq:lambda_growth}
\end{equation}

\subsubsection{Critical Time Derivation}

The system flags an individual when they have $m$ or more matching attributes. At time $t$, the per-person false alert probability is:
\begin{equation*}
q(t) = \Pr(\text{Poisson}(\lambda(t)) \geq m).
\end{equation*}

With population size $n$, the probability of at least one false alert is approximately:
\begin{equation*}
\Pr(\text{system false alert at time } t) \approx 1 - e^{-n \cdot q(t)}.
\end{equation*}

The system becomes unreliable when $n \cdot q(t) \gtrsim 1$.

\begin{theorem}[System Lifetime Under Exponential Growth]
\label{thm:system_lifetime}
Consider a screening system with exponential data growth $k(t) = k_0\gamma^t$ (where $\gamma > 1$), per-attribute match probability $p$, fixed threshold $m$, and population size $n$.

For populations satisfying $n \gtrsim \sqrt{m} \exp(m/2)$ (such that system failure occurs near $\lambda(T^*) \approx m$ rather than deep in the Poisson tail where $\lambda \ll m$), the system becomes unreliable at time
\begin{equation}
T^* \approx \frac{1}{\log \gamma} \log\left(\frac{m}{k_0 p}\right),
\label{eq:critical_time_main}
\end{equation}
where the critical time is characterized by $\lambda(T^*) \approx m$.

The population size $n$ introduces a correction of order $\displaystyle \frac{1}{\log \gamma}\sqrt{\frac{\log n}{m}}$ to the time to failure (equivalently, a shift of order $\sqrt{2m\log n}$ in $\lambda$), up to logarithmic factors in $m$ and $n$. This is typically weak relative to the effects of $\gamma$ and $m$.
\end{theorem}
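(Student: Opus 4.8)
The plan is to locate $T^*$ from the unreliability condition $n\,q(T^*)\approx 1$: solve $q(\lambda^*)=1/n$ for $\lambda^*:=\lambda(T^*)$ and then invert the growth law. First I would observe that $q(t)=\Pr(\mathrm{Poisson}(\lambda(t))\ge m)$ is strictly increasing in $t$, since $\lambda(t)=k_0p\gamma^t$ increases ($\gamma>1$) and $\lambda\mapsto\Pr(\mathrm{Poisson}(\lambda)\ge m)$ is strictly increasing (its $\lambda$-derivative is $e^{-\lambda}\lambda^{m-1}/(m-1)!>0$, or: Poisson is stochastically ordered in its mean). Hence $n\,q(t)=1$ has a unique root $T^*$, and from $\lambda^*=k_0p\gamma^{T^*}$ we get $T^*=\frac{1}{\log\gamma}\log\!\big(\lambda^*/(k_0p)\big)$ at once, so the whole problem reduces to pinning down $\lambda^*$.

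The next step is to show $\lambda^*\approx m$. Here I would invoke the two-sided Poisson tail bounds of Theorem~\ref{thm:critical_pop} (equivalently Lemma~\ref{lem:poisson_upper}), which give $q(\lambda)=\exp\!\big(-\lambda D(m/\lambda\|1)+O(\log\lambda)\big)$, so $q(\lambda^*)=1/n$ forces $\lambda^*\,D(m/\lambda^*\|1)=\log n+O(\log\lambda^*)$. Since $D(c\|1)=c\log c-c+1$ is continuous, vanishes only at $c=1$, and is bounded below by a positive constant once $c$ is bounded away from $1$, the left-hand side is $\Theta(\lambda^*)=\Theta(m)$ unless $m/\lambda^*\to1$. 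In the regime where the right-hand side is $o(m)$ --- which is exactly the hypothesis that failure occurs near $\lambda(T^*)\approx m$ rather than deep in the tail, i.e.\ $\log n=o(m)$ together with $n$ at least a fixed constant so that $1/n$ is genuinely below the Poisson median gap --- we must have $m/\lambda^*\to1$, giving $\lambda(T^*)\approx m$ and hence the leading-order claim $T^*\approx\frac{1}{\log\gamma}\log\!\big(m/(k_0p)\big)$.

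Finally I would extract the population correction by a Taylor expansion near $c=1$. Writing $\lambda^*=m-\delta$ with $\delta=o(m)$ and $c=m/\lambda^*=1+\delta/m+O((\delta/m)^2)$, and using $D(1+u\|1)=\tfrac12u^2+O(u^3)$, one gets $\lambda^*\,D(m/\lambda^*\|1)=\tfrac{\delta^2}{2m}\big(1+O(\delta/m)\big)$. Matching this to $\log n$ --- the $O(\log\lambda^*)=O(\log m)$ correction from the $\lambda^{-1/2}$ tail prefactor and the $O(\delta/m)$ Taylor remainder both being absorbed into the stated ``up to logarithmic factors'' slack --- yields $\delta=\sqrt{2m\log n}\,(1+o(1))$, a shift of order $\sqrt{2m\log n}$ in $\lambda$ at failure. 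Substituting into $T^*=\frac{1}{\log\gamma}\log\!\big((m-\delta)/(k_0p)\big)=\frac{1}{\log\gamma}\log\!\big(m/(k_0p)\big)+\frac{1}{\log\gamma}\log(1-\delta/m)$ and expanding $\log(1-\delta/m)=-\delta/m+O((\delta/m)^2)$ produces the correction $-\frac{1}{\log\gamma}\sqrt{2\log n/m}\,(1+o(1))$, of the claimed order $\frac{1}{\log\gamma}\sqrt{\log n/m}$, which is $o(1)$ --- hence negligible beside the $\Theta(1/\log\gamma)$ leading term --- precisely when $\log n=o(m)$.

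The main obstacle is bookkeeping of lower-order terms in the last two steps, not any delicate estimate. The subexponential prefactor $\lambda^{-1/2}$ in the Poisson tail contributes a $\tfrac12\log\lambda^*\approx\tfrac12\log m$ term to the exponent equation $\lambda^*D=\log n-\tfrac12\log m+O(1)$, which is of the same order as (or larger than) $\log n$ when $n$ is only polynomially large in $m$; so to report the correction as genuinely $\sqrt{2m\log n}$ one must either restrict to $\log m=o(\log n)$ or, as the statement does, hide it inside the ``up to logarithmic factors'' qualifier. One must similarly check the Taylor remainder $O(\delta/m)=O(\sqrt{\log n/m})$ is $o(1)$, which is again just the regime hypothesis $m\gg\log n$; outside it, i.e.\ the deep-in-the-tail regime $\log n\gtrsim m$, the ratio $c=m/\lambda^*$ stays bounded away from $1$ and the failure time is governed by the full rate function $D$ rather than its quadratic approximation, which is why the theorem restricts to populations for which failure occurs near $\lambda(T^*)\approx m$.
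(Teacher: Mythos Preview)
Your argument is correct and follows essentially the same route as the paper: set $n\,q(T^*)\sim 1$, localize $\lambda^*$ near $m$, extract the $\sqrt{2m\log n}$ shift from the quadratic behavior of the tail exponent there, and translate back through $\lambda(t)=k_0p\gamma^t$. The only cosmetic difference is that the paper obtains the local exponent $(m-\lambda)^2/(2\lambda)$ via the normal approximation and Mills' ratio, whereas you reach the identical expression $\delta^2/(2m)$ by Taylor-expanding the rate function $D(c\|1)$ at $c=1$; these are two phrasings of the same Gaussian local limit, and your version has the minor advantage of staying entirely within the two-sided bounds already proved in Theorem~\ref{thm:critical_pop}. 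Your final paragraph on the competition between the $\tfrac12\log m$ prefactor term and $\log n$, and on the breakdown of the quadratic approximation when $\log n\gtrsim m$, makes explicit the scope of the ``up to logarithmic factors'' qualifier that the paper's sketch leaves implicit.
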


\begin{proof}[Proof sketch]
The system fails when the expected number of false alerts reaches order one: $n \cdot q(T^*) \sim 1$. 
For $\lambda < m$, the Poisson tail probability $q(\lambda) = \Pr(\text{Poisson}(\lambda) \ge m)$ is exponentially suppressed by the factor $\exp(-\lambda D(m/\lambda \| 1))$. As $\lambda$ increases from well below $m$ toward $m$, the rate function $D(m/\lambda \| 1)$ decreases to zero, causing $q(\lambda)$ to increase sharply from nearly zero to order one. When $\lambda > m$, the tail probability increases rapidly toward 1. The transition occurs sharply near $\lambda = m$ due to concentration of the Poisson distribution.

For $\lambda$ near $m$, we use the normal approximation Poisson($\lambda$) $\approx N(\lambda, \lambda)$, which yields $q(\lambda) = \Pr(X \geq m) \approx 1 - \Phi((m-\lambda)/\sqrt{\lambda})$. For $\lambda \lesssim m$, the Gaussian tail asymptotics (Mills' ratio: $1-\Phi(x) \sim \phi(x)/x$ as $x \to \infty$) give the more precise form:
\begin{equation*}
q(\lambda) \sim \frac{\sqrt{\lambda}}{(m-\lambda)\sqrt{2\pi}}\,e^{-(m-\lambda)^2/(2\lambda)}
\qquad (\lambda < m).
\end{equation*}
Setting $nq(T^*) \sim 1$ gives $\lambda(T^*) \approx m - \sqrt{2m\log n}$, which is a small correction to $\lambda(T^*) = m$ for typical parameters. Translating through $\lambda(t)=k_0 p\,\gamma^t$ yields a time shift of order $\frac{1}{\log \gamma}\sqrt{\frac{\log n}{m}}$. Substituting into $\lambda(t) = k_0 p \gamma^t$ yields \eqref{eq:critical_time_main} as the leading-order term.
\end{proof}

\subsubsection{Key Insights}

Equation \eqref{eq:critical_time_main} reveals several fundamental properties:

\begin{enumerate}
\item \textbf{Finite lifetime is inevitable:} 
For any fixed threshold $m$ and exponential growth $\gamma > 1$, we have $T^* < \infty$.

Even if one attempts to preserve reliability by scaling the threshold with the data ($m \propto \lambda(t)$), doing so would require the monitored population to grow as 
$n_{\mathrm{crit}}(t) \sim \exp(C \gamma^{t})$ for some $C>0$---a doubly-exponential rate that far exceeds any realistic population dynamics.

Because real surveillance systems use fixed thresholds and monitor approximately fixed populations, temporal failure becomes inevitable rather than merely possible. This mathematical structure parallels the ``double birthday paradox'' in our earlier work~\cite{pollanen2024}, though it arises here in a distinct screening context.

\item \textbf{Logarithmic dependence on threshold:} Doubling the threshold $m$ adds only $(\log 2)/\log\gamma$ time units. For annual doubling ($\gamma = 2$), this is exactly 1 year. Even increasing $m$ tenfold extends lifetime by only $\log_{2}(10) \approx 3.3$ years.

\item \textbf{Inverse dependence on growth rate:} The factor $1/\log\gamma$ means faster data growth dramatically reduces system lifetime. Increasing $\gamma$ from 1.5 to 2 (from 50\% to 100\% annual growth) roughly halves the operational lifetime.

\item \textbf{Weak population dependence:} While larger populations cause slightly earlier failure, this effect is logarithmic in $n$ and secondary to the exponential effects of $\gamma$ and $m$. The system lifetime is primarily determined by data growth dynamics, not population size.
\end{enumerate}

The temporal phase transition in Figure~\ref{fig:phase_transitions}(c) illustrates 
these dynamics: a system with 50\% annual growth ($\gamma = 1.5$) operates reliably 
for approximately 4 years before crossing the critical threshold, after which false 
alerts become statistically inevitable.

\subsubsection{Quantitative Examples}

\begin{example}[Border Security System]
\label{ex:temporal_scaling}
Consider a border security system with initial attributes $k_0 = 100$, annual growth rate $\gamma = 1.5$ (50\%), match probability $p = 0.01$, and threshold $m = 5$. The critical time is:
\begin{equation*}
T^* = \frac{\log(5/(100 \times 0.01))}{\log 1.5} = \frac{\log 5}{\log 1.5} \approx 4.0 \text{ years}.
\end{equation*}

The lifetime sensitivity reveals fundamental constraints. Increasing $m$ from 3 to 10:
\begin{align*}
T^*(m=3) &= \frac{\log(3/(100 \times 0.01))}{\log 1.5} = \frac{\log 3}{\log 1.5} \approx 2.7 \text{ years},\\
T^*(m=10) &= \frac{\log(10/(100 \times 0.01))}{\log 1.5} = \frac{\log 10}{\log 1.5} \approx 5.7 \text{ years}.
\end{align*}
Increasing the threshold from 3 to 10 extends lifetime from 2.7 to only 5.7 years (logarithmic benefit). In contrast, doubling $\gamma$ from 1.5 to 2.0:
\begin{equation*}
T^*(m=5, \gamma=2.0) = \frac{\log 5}{\log 2} \approx 2.3 \text{ years},
\end{equation*}
cuts lifetime nearly in half (from 4.0 to 2.3 years).
Growth rate dominates system longevity; threshold adjustments provide minimal protection.
\end{example}

\subsubsection{Practical Implications}

\begin{enumerate}
\item \textbf{Predictable obsolescence:} Every system has a calculable expiration date $T^* \sim \frac{1}{\log\gamma}\log(m/k_0 p)$. Monitoring $\lambda(t)/m$ provides early warning.

\item \textbf{Growth rate dominates:} Lifetime scales as $1/\log\gamma$; doubling $\gamma$ halves operational lifetime. Threshold adjustments provide only logarithmic benefit.
\end{enumerate}

\begin{remark}[Non-Exponential Data Growth]
\label{rem:non_exponential}
The analysis above focuses on exponential growth, but alternative data-growth 
models lead to qualitatively different system lifetimes. For simplicity, we 
identify the onset of system unreliability with the heuristic condition 
$\lambda(t) = p\,k(t) \approx m$, which corresponds to the point at which 
false-alert probabilities increase sharply. We assume $m > p k_0$ so that 
this regime is reached, if at all, at some $T^* > 0$.

\begin{itemize}
\item \textbf{Polynomial growth} $k(t) = k_0 + \gamma t^\alpha$ with $\alpha \ge 1$:
Solving $p(k_0 + \gamma T^{*\alpha}) \approx m$ gives
\[
T^* \approx \left( \frac{m/p - k_0}{\gamma} \right)^{1/\alpha},
\]
so $T^* \propto m^{1/\alpha}$ grows linearly when $\alpha = 1$ and
sublinearly when $\alpha > 1$. Larger values of $\alpha$ correspond to
faster growth of $k(t)$, leading to shorter system lifetimes. Moreover,
because $1/\alpha$ decreases with $\alpha$, increases in $m$ yield
progressively smaller gains in $T^*$, further reducing threshold
protection. Nonetheless, polynomial growth still provides significantly
more protection than exponential growth, where $T^*$ increases only
logarithmically with $m$.

\item \textbf{Logarithmic growth} $k(t) = k_0 + \beta\log(1 + \gamma t)$: 
Here
\[
T^* \approx \frac{\exp\!\big((m/p - k_0)/\beta\big) - 1}{\gamma},
\]
so the time required for $\lambda(t)$ to reach $m$ grows exponentially in the 
threshold $m$. Because $k(t)$ itself grows only logarithmically in $t$, this 
time can be extremely large, and practical data collection may saturate before 
a critical regime is ever reached.

\item \textbf{Bounded growth} $k(t) \to k_{\max}$: 
If the maximal attainable rate $\lambda_{\max} = p\,k_{\max}$ satisfies 
$\lambda_{\max} < m$, the system never enters the high-false-alert regime and 
remains below the critical level for all time.
\end{itemize}

Exponential growth is the critical case because it eventually exceeds any 
subexponential threshold-adjustment strategy: if $\lambda(t)$ grows 
exponentially while $m(t)$ increases subexponentially, then 
$\lambda(t) > m(t)$ for all sufficiently large $t$. Empirical studies suggest 
that historical data-collection capacity has grown approximately exponentially 
\cite{hilbert2011}, making this the scenario of greatest relevance for policy 
analysis.
\end{remark}

\begin{figure}[htbp]
\centering
\includegraphics[width=0.95\textwidth]{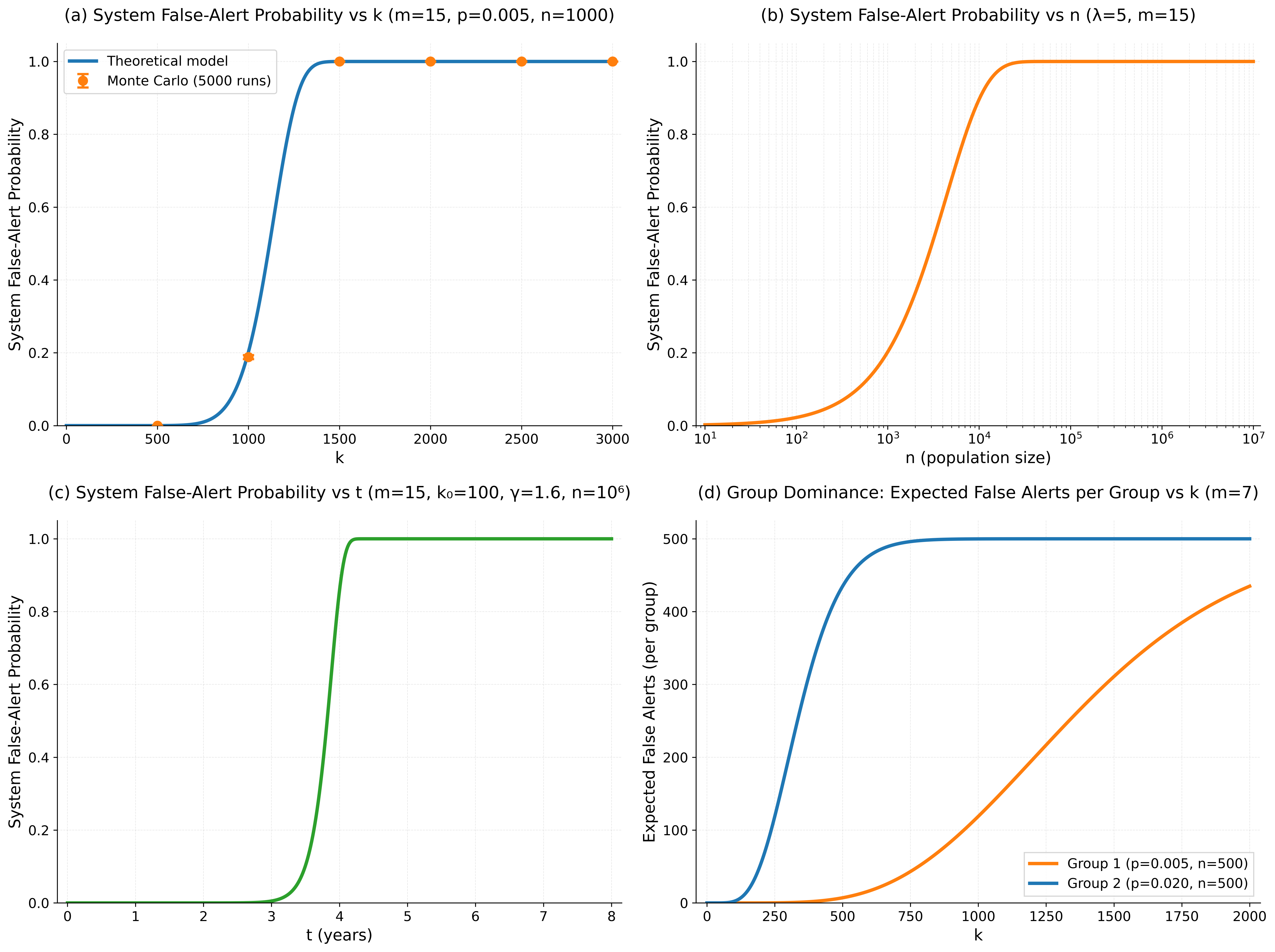}
\caption{\textbf{Phase transitions in surveillance system reliability.} 
Systems exhibit sharp transitions from reliable to unreliable operation across four dimensions. 
\textbf{(a) Attribute growth:} Monte Carlo simulations (orange points, 5000 runs per data point) validate the theoretical predictions (blue curve) with mean absolute error below 0.002. \emph{Takeaway:} The sharp S-curve illustrates a phase transition: systems remain reliable until a critical attribute count is reached, after which reliability collapses rapidly with almost no intermediate zone. 
\textbf{(b) Population scaling:} False-alert probability as a function of population size (log scale) for fixed $\lambda$ and threshold. \emph{Takeaway:} The transition sharpens as $\lambda$ increases, confirming Proposition~\ref{prop:sharp_threshold}. Populations below $n_{\mathrm{crit}} \sim e^{\lambda D}$ are reliable, while those above this scale almost certainly generate false alerts. 
\textbf{(c) Temporal dynamics:} Under exponential data growth with $\gamma = 1.5$ (50\% annual growth), systems transition from reliable to unreliable at a predictable time $T^* \approx 4$ years (Theorem~\ref{thm:system_lifetime}). \emph{Takeaway:} Degradation is abrupt rather than gradual---systems remain functional until they cross a critical time threshold and then fail rapidly. 
\textbf{(d) Group dominance:} Two groups with different exposure rates ($p_1 = 0.005$ vs.\ $p_2 = 0.02$) exhibit markedly different false-alert trajectories. \emph{Takeaway:} The high-exposure group (solid) reaches the failure regime much sooner than the low-exposure group (dashed), illustrating the structural exposure-driven disparities described in Proposition~\ref{prop:exposure_amplification}.}
\label{fig:phase_transitions}
\end{figure}

\medskip\noindent\textbf{Key Takeaway.}
Under exponential data growth, every screening system has a finite, calculable 
lifetime $T^* \approx \frac{1}{\log \gamma}\log(m/k_0 p)$ determined by when 
$\lambda(t)$ reaches the threshold $m$. Threshold adjustments provide only 
logarithmic protection against exponential growth; the growth rate $\gamma$ 
dominates system longevity, making temporal failure inevitable rather than 
merely possible.

\subsection{Heterogeneous Population Structure}
\label{sec:heterogeneous}

Not all individuals have equal representation in surveillance databases. Socioeconomic, geographic, and demographic factors lead to differential exposure rates \cite{lyon2003,eubanks2018}.

\subsubsection{Multi-Group Model}

Partition the population into $G$ groups, such that group $g$ has $n_g$ individuals, with $\sum_{g=1}^{G} n_g = n$, and
each individual in group $g$ has per-attribute match probability $p_g$. Let $\lambda_g = kp_g$ and define the per-group false alert probability:
\begin{equation*}
q_g = \Pr(\text{Poisson}(\lambda_g) \geq m).
\end{equation*}

\begin{proposition}[Heterogeneous System Risk]
\label{prop:hetero_risk}
The system-level false alert probability is:
\begin{equation}
\Pr(\text{false alert}) = 1 - \prod_{g=1}^{G} (1 - q_g)^{n_g} \approx 1 - \exp\left(-\sum_{g=1}^{G} n_g q_g\right).
\label{eq:hetero_system_risk}
\end{equation}
\end{proposition}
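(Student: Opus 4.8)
The plan is to reduce this to the homogeneous computation already performed in Theorem~\ref{thm:critical_pop}, applied group by group. The first step is to observe that the event ``no false alert'' is exactly the intersection, over all $n$ individuals, of the events ``individual $i$ is not flagged.'' Under the independence-across-individuals assumption (Remark~\ref{rem:independence}), this joint probability factors into a product of the individual non-flagging probabilities, so the entire argument is bookkeeping of that product.

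Next I would collect the factors by demographic class. Every individual in group $g$ has per-attribute match probability $p_g$, hence (under the Poisson approximation of Section~\ref{subsec:poisson_approx}, with $\lambda_g = kp_g$) per-person false alert probability $q_g = \Pr(\mathrm{Poisson}(\lambda_g) \ge m)$ and non-flagging probability $1 - q_g$. Since group $g$ contributes $n_g$ such individuals, its factor in the product is $(1-q_g)^{n_g}$, and multiplying over the $G$ groups gives $\Pr(\text{no false alert}) = \prod_{g=1}^{G} (1-q_g)^{n_g}$. Taking complements yields the exact identity $\Pr(\text{false alert}) = 1 - \prod_{g=1}^{G} (1-q_g)^{n_g}$.

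For the stated approximation I would invoke the elementary two-sided bound $e^{-q/(1-q)} \le 1-q \le e^{-q}$ (the same inequality derived in the proof of Theorem~\ref{thm:critical_pop}) applied to each $q_g$; raising to the $n_g$ power and multiplying sandwiches $\prod_{g=1}^{G}(1-q_g)^{n_g}$ between $\exp\!\big(-\sum_g n_g q_g/(1-q_g)\big)$ and $\exp\!\big(-\sum_g n_g q_g\big)$. When each $q_g \ll 1$ these agree to leading order, giving $\prod_g (1-q_g)^{n_g} \approx \exp\!\big(-\sum_g n_g q_g\big)$. There is no genuine obstacle here; the only point requiring care is that the approximation is meaningful only in the small-$q_g$ regime, so I would state explicitly that the error in the exponent is $O\!\big(\sum_g n_g q_g^2\big)$, negligible precisely when the system sits near or below its critical scale, and note that positive dependence across individuals (if present) would only enlarge the left-hand side, so the identity remains a valid lower bound.
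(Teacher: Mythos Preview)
Your proof is correct and follows essentially the same approach as the paper: factor the ``no false alert'' probability over individuals by independence, group the factors as $(1-q_g)^{n_g}$, take complements, and then use the small-$q$ exponential approximation. Your version is slightly more careful in invoking the two-sided bound $e^{-q/(1-q)} \le 1-q \le e^{-q}$ and quantifying the error, whereas the paper simply cites $(1-x)^n \approx e^{-nx}$, but the argument is the same.
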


\begin{proof}
The events ``individual $i$ in group $g$ generates a false alert'' are independent across individuals. The probability that no one in group $g$ alerts is $(1 - q_g)^{n_g}$. Since groups are disjoint, the probability that no one alerts is the product over groups. The approximation follows from $(1-x)^n \approx e^{-nx}$ for small $x$.
\end{proof}

\subsubsection{Dominance and Disparity}

\begin{theorem}[Group Dominance Effect: Algebraic Decomposition]
\label{thm:dominance}
Let $g^* = \argmax_g \{n_g q_g\}$ be the group contributing the most to system risk. Then:
\begin{equation}
\Pr(\text{false alert}) \approx 1 - e^{-n_{g^*}q_{g^*}} + O\!\left(\Bigl(\sum_{g \neq g^*} n_g q_g\Bigr) e^{-n_{g^*}q_{g^*}}\right).
\label{eq:dominance}
\end{equation}
If $n_{g^*}q_{g^*} \gg \sum_{g \neq g^*} n_g q_g$, then group $g^*$ dominates system behavior. This is an algebraic decomposition showing how system-level risk concentrates in the highest-exposure group.
\end{theorem}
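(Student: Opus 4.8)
The plan is to obtain \eqref{eq:dominance} by a short exact computation on the Poissonized system risk of Proposition~\ref{prop:hetero_risk}, with the displayed $\approx$ inherited from that proposition (so the Poisson-approximation error incurred in replacing $(1-q_g)^{n_g}$ by $e^{-n_g q_g}$ is not re-derived here). Set $A := n_{g^*}q_{g^*}$ and $S := \sum_{g\neq g^*} n_g q_g$, so that $\sum_{g=1}^{G} n_g q_g = A + S$ and, by Proposition~\ref{prop:hetero_risk}, $\Pr(\text{false alert}) \approx 1 - e^{-(A+S)}$. The theorem then reduces to controlling $e^{-A} - e^{-(A+S)}$.

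The key step is the factorization $e^{-A} - e^{-(A+S)} = e^{-A}(1 - e^{-S})$ together with the elementary inequality $0 \le 1 - e^{-x} \le x$ valid for all $x \ge 0$ (immediate from $g(x) := x - 1 + e^{-x}$ satisfying $g(0) = 0$ and $g'(x) = 1 - e^{-x} \ge 0$). Applying this with $x = S \ge 0$ gives $0 \le e^{-A} - e^{-(A+S)} \le S\,e^{-A}$, hence $1 - e^{-(A+S)} = (1 - e^{-A}) + O(S\,e^{-A})$, which is exactly \eqref{eq:dominance} after substituting back $A = n_{g^*}q_{g^*}$ and $S = \sum_{g\neq g^*} n_g q_g$. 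No case split on the size of $S$ is needed, since $1 - e^{-S} \le S$ holds for every $S \ge 0$, so the $O(\cdot)$ bound is uniform.

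For the concluding dominance claim I would note that the error relative to the leading contribution is $S\,e^{-A}/(1 - e^{-A})$, which is $\Theta(S\,e^{-A})$ once $A \gtrsim 1$ and hence tends to $0$ precisely under the hypothesis $n_{g^*}q_{g^*} \gg \sum_{g\neq g^*} n_g q_g$; thus to leading order $\Pr(\text{false alert}) \approx 1 - e^{-n_{g^*}q_{g^*}}$, governed by the single highest-exposure group. If one prefers a fully self-contained statement not resting on the proposition's $\approx$, the same computation runs on the exact identity $\Pr(\text{false alert}) = 1 - \prod_{g}(1-q_g)^{n_g}$ after sandwiching each factor via $e^{-n_g q_g/(1-q_g)} \le (1-q_g)^{n_g} \le e^{-n_g q_g}$ (the bound already used in the proof of Theorem~\ref{thm:critical_pop}). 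This theorem carries no real obstacle; the only genuine bookkeeping point — and it is minor — is to keep the Poissonization error of order $\sum_g n_g q_g^2$ conceptually separate from the algebraic remainder $O(S\,e^{-A})$, so the two are not conflated when the bound is stated.
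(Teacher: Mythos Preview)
Your proof is correct and follows essentially the same route as the paper: start from the Poissonized system risk $1 - e^{-(A+S)}$ of Proposition~\ref{prop:hetero_risk}, factor out $e^{-A}$, and bound $1 - e^{-S}$ by $S$. Your use of the exact inequality $0 \le 1 - e^{-S} \le S$ (valid for all $S \ge 0$) is a mild sharpening of the paper's ``$e^{-x} \approx 1 - x$ for small $x$'' step, making the $O(\cdot)$ bound uniform rather than asymptotic, but the argument is otherwise identical.
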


\begin{proof}
From Proposition~\ref{prop:hetero_risk}:
\begin{equation*}
1 - \exp\left(-\sum_{g=1}^{G} n_g q_g\right) = 1 - \exp(-n_{g^*}q_{g^*}) \cdot \exp\left(-\sum_{g \neq g^*} n_g q_g\right).
\end{equation*}
Using $e^{-x} \approx 1 - x$ for small $x$:
\begin{align*}
&\approx 1 - e^{-n_{g^*}q_{g^*}}\left(1 - \sum_{g \neq g^*} n_g q_g\right) \\
&= 1 - e^{-n_{g^*}q_{g^*}} + e^{-n_{g^*}q_{g^*}}\sum_{g \neq g^*} n_g q_g.
\end{align*}
The main term is $1 - e^{-n_{g^*}q_{g^*}}$, representing the contribution from group $g^*$. The correction term is of order $\bigl(\sum_{g \neq g^*} n_g q_g\bigr) e^{-n_{g^*}q_{g^*}}$ and becomes negligible whenever $n_{g^*} q_{g^*} \gg \sum_{g \neq g^*} n_g q_g$.
\end{proof}

\subsubsection{Numerical Example}

\begin{example}[Geographic Disparities]
\label{ex:geographic_disparities}

Consider a city with two neighborhoods:
\begin{itemize}
\item Neighborhood A (low surveillance): $n_A = 100{,}000$, $p_A = 0.005$
\item Neighborhood B (high surveillance): $n_B = 100{,}000$, $p_B = 0.02$
\end{itemize}

With $k = 100$ attributes and threshold $m = 3$:
\begin{align*}
\lambda_A &= 100 \cdot 0.005 = 0.5, \quad q_A = \sum_{j=3}^{\infty} e^{-0.5}\frac{0.5^j}{j!} \approx 0.014, \\
\lambda_B &= 100 \cdot 0.02 = 2, \quad q_B = \sum_{j=3}^{\infty} e^{-2}\frac{2^j}{j!} \approx 0.323.
\end{align*}

Expected number of false alerts:
\begin{align*}
n_A q_A &\approx 100{,}000 \cdot 0.014 = 1{,}400, \\
n_B q_B &\approx 100{,}000 \cdot 0.323 = 32{,}300.
\end{align*}

Despite equal population sizes, Neighborhood B experiences approximately 23 times 
more false alerts. This exponential disparity, visualized in 
Figure~\ref{fig:phase_transitions}(d), demonstrates how differential exposure rates 
create fundamentally unequal outcomes that cannot be remedied through threshold 
adjustments alone. Moreover, by Theorem~\ref{thm:dominance}, since $n_B q_B \gg n_A q_A$, the system-level false alert probability is dominated by Neighborhood B:
\begin{equation*}
\Pr(\text{system false alert}) \approx 1 - e^{-32{,}300} \approx 1.
\end{equation*}

The system is essentially guaranteed to produce false alerts, driven almost entirely by the heavily surveilled neighborhood. This illustrates how heterogeneous exposure not only creates disparate individual-level burdens but also determines aggregate system reliability.
\end{example}

In the UCI Adult Census data, increasing the threshold produces the predicted
divergence in group contributions: at $m = 8$, a 2\% subgroup generates 17\% of
false alerts, while the lowest-exposure group contributes none. See
Appendix~\ref{app:empirical}.

\subsubsection{Exposure Amplification Through Poisson Tails}

\begin{proposition}[Exposure Amplification Through Poisson Tails]
\label{prop:exposure_amplification}
Consider two groups with equal population size ($n_1 = n_2 = n/2$) but different per-attribute match probabilities $p_1 < p_2$. Let $\lambda_i = kp_i$ and suppose both groups are screened with common threshold $m$. Then:
\begin{enumerate}
\item If $\lambda_1 < m \le \lambda_2$, the disparity in expected false 
alerts is exponential:
\begin{equation}
\frac{n_2 q_2}{n_1 q_1} = \frac{q_2}{q_1} \ge \exp(c \cdot m)
\label{eq:exponential_disparity}
\end{equation}
for some constant $c > 0$ depending on $\lambda_1$, for all sufficiently 
large $m$. The disparity grows super-exponentially because $q_1$ decays 
exponentially while $q_2$ remains bounded away from zero.
\item The fraction of false alerts attributable to Group 2 approaches 1 as the exposure gap grows:
\begin{equation*}
\frac{n_2 q_2}{n_1 q_1 + n_2 q_2} \to 1 \quad \text{as } \lambda_2 - m \text{ increases}.
\end{equation*}
\item For fixed threshold $m$ and exposure ratio $\alpha = p_2/p_1 > 1$, as $k$ increases, Group 2 enters the critical regime ($\lambda_2 \approx m$) before Group 1, creating a temporal window of maximum disparity.
\end{enumerate}
\end{proposition}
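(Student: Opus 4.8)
The plan is to establish the three parts in order: part~(1) carries all the analytic content, and parts~(2)--(3) then follow by elementary bookkeeping. For part~(1) I would sandwich the two per-group probabilities from opposite sides. Since $m > \lambda_1$, Lemma~\ref{lem:poisson_upper} gives $q_1 \le \exp(-\lambda_1 D(m/\lambda_1\|1))$, and expanding $\lambda_1 D(m/\lambda_1\|1) = m\log(m/\lambda_1) - m + \lambda_1$ turns this into $q_1 \le \exp(-m\log(m/\lambda_1) + m - \lambda_1)$. For $q_2$ I would use that the Poisson family is stochastically increasing in its mean (couple $\mathrm{Poisson}(\lambda_2)$ as $\mathrm{Poisson}(m)$ plus an independent $\mathrm{Poisson}(\lambda_2 - m)$, legitimate since $\lambda_2 \ge m$) to reduce to $q_2 \ge \Pr(\mathrm{Poisson}(m) \ge m) \ge \Pr(\mathrm{Poisson}(m) = m) = e^{-m} m^m / m!$, which by Stirling is of order $m^{-1/2}$ (and in fact $\Pr(\mathrm{Poisson}(m)\ge m) \ge \tfrac12 - O(m^{-1/2})$ by the Poisson central limit theorem, matching the statement's parenthetical). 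Dividing and using $n_1 = n_2$ gives $\log(n_2 q_2 / n_1 q_1) = \log(q_2/q_1) \ge m\log(m/\lambda_1) - m + \lambda_1 - \tfrac12\log(2\pi m)$; since $\lambda_1$ is fixed, $\log(m/\lambda_1) \to \infty$, so the right-hand side eventually exceeds $cm$ for any prescribed $c > 0$, i.e.\ there is $M(c)$ with $q_2/q_1 \ge e^{cm}$ for all $m \ge M(c)$. This is \eqref{eq:exponential_disparity} together with the claimed super-exponential growth.

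For part~(2), with $n_1 = n_2$ the attributable fraction equals $q_2/(q_1 + q_2) = (1 + q_1/q_2)^{-1}$, so it suffices to control $q_1/q_2$. In the regime where the gap widens with $m$ also growing (the setting of part~(1)), part~(1) already gives $q_1/q_2 \le e^{-cm} \to 0$, hence the fraction tends to $1$; if instead $m$ is held fixed and only $\lambda_2 \to \infty$, then $q_2 \uparrow 1$ while $q_1$ is the fixed constant bounded by $e^{-\lambda_1 D(m/\lambda_1\|1)}$, so the fraction increases monotonically to $1/(1 + q_1)$, which is close to $1$ exactly because the Chernoff bound forces $q_1$ to be small when $m/\lambda_1$ is even moderately large. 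For part~(3), set $\lambda_i(k) = k p_i$; group $i$ reaches its critical value $\lambda_i = m$ at $k = k_i^{\star} := m/p_i$, and $p_2 = \alpha p_1 > p_1$ forces $k_2^{\star} = m/p_2 < m/p_1 = k_1^{\star}$ with $k_1^{\star}/k_2^{\star} = \alpha$. For $k$ in the window $(k_2^{\star}, k_1^{\star})$ we are exactly in the hypothesis $\lambda_1 < m \le \lambda_2$ of part~(1), so the expected-alert disparity is exponential in $m$; for $k < k_2^{\star}$ both groups sit deep in the Poisson tail and contribute negligibly, and for $k > k_1^{\star}$ both $q_i \to 1$ so the ratio returns to order $1$. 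Hence the disparity is concentrated in a window of multiplicative width $\alpha$ in $k$.

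No step here is genuinely hard: the whole argument reduces to one application of the Chernoff bound of Lemma~\ref{lem:poisson_upper} to $q_1$, a crude Stirling (or central-limit) lower bound on $\Pr(\mathrm{Poisson}(m) \ge m)$, and the stochastic monotonicity of the Poisson family. The points requiring a little care are (a) making the quantifier in part~(1) precise --- "for every fixed $c$ there is an $M(c)$", which is stronger than the "for some $c$" form in the display --- and (b) phrasing part~(2)'s "approaches $1$" honestly: it is an exact limit when $m \to \infty$ along the widening gap, and otherwise the sharp statement is that the fraction converges to $1/(1+q_1)$ with $q_1$ controlled by the Chernoff bound.
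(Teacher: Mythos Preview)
Your proposal is correct and follows essentially the same route as the paper's proof: Chernoff bound on $q_1$, stochastic monotonicity plus a CLT/Stirling lower bound on $q_2 \ge \Pr(\mathrm{Poisson}(m) \ge m)$, then elementary bookkeeping for (2) and (3). Your treatment of part~(2) is in fact more careful than the paper's, which simply invokes $q_2/q_1 \to \infty$ from part~(1) without distinguishing the fixed-$m$ regime where the honest limit is $1/(1+q_1)$.
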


\begin{proof}[Proof sketch]
(1) For $\lambda_2 \ge m$, we have $q_2 \ge \Pr(\mathrm{Poisson}(m) \ge m) 
\ge 1/2 - O(m^{-1/2})$ by normal approximation, while 
$q_1 \le e^{-\lambda_1 D(m/\lambda_1\|1)}$ is exponentially small 
(Lemma~\ref{lem:poisson_upper}). The ratio grows super-exponentially.
(2) Since $q_2/q_1 \to \infty$, we have $q_2/(q_1 + q_2) \to 1$.
(3) Group~2 reaches $\lambda_2 = m$ at $k = m/p_2 < m/p_1$, so 
disparity is maximal for $k \in (m/p_2, m/p_1)$.
Full details appear in Appendix~\ref{app:proof_exposure}.
\end{proof}

\begin{remark}[Within-Group Individual Heterogeneity]
\label{rem:within_group}
Even within a demographic group $g$, individuals may experience different 
exposure levels due to occupation, neighborhood, behavioral patterns, or 
daily routines, leading to person-specific match probabilities 
$p_g^{(i)}$. A convenient model treats $p_g^{(i)}$ as random with mean 
$\bar{p}_g$, for example $p_g^{(i)} \sim \mathrm{Beta}(\alpha_g,\beta_g)$ with 
$\mathbb{E}[p_g^{(i)}] = \bar{p}_g$. This induces a mixture of Poisson 
distributions for match counts whose variance exceeds the Poisson variance 
of the homogeneous case (classical overdispersion).

The qualitative phenomena we describe persist under within-group 
heterogeneity. As long as the group-level means $\bar{p}_g$ differ, the 
higher-exposure group reaches false-alert thresholds earlier on average. 
Moreover, within-group heterogeneity may actually \emph{increase} disparity: 
overdispersion inflates upper-tail probabilities, making heterogeneous groups 
more likely to produce false alerts than homogeneous groups with the same 
mean exposure. A full mixture analysis is beyond our present scope, but the 
structural exposure-disparity mechanism is robust to this generalization.
\end{remark}

\medskip\noindent\textbf{Key Takeaway.}
Differential surveillance exposure creates exponential disparities in 
false-alert rates through Poisson tail behavior. Because this mechanism arises 
from data collection intensity rather than classifier design, neither threshold 
adjustment nor standard algorithmic fairness interventions can eliminate it. 
The disparity persists under within-group heterogeneity and may even intensify 
due to overdispersion. Achieving genuine outcome parity requires equalizing 
surveillance intensity itself.

\subsection{Effective Dimensionality Under Correlation}
\label{sec:spatialtemporal}

Real surveillance data exhibit spatial and temporal dependencies that 
fundamentally alter tail probabilities. While correlation does not change the 
expected number of matches ($\mathbb{E}[\sum_i X_i] = kp$ regardless of 
dependence), it induces \emph{overdispersion} that inflates tail probabilities 
beyond the independent Poisson approximation.

The key insight is that positive correlation reduces the \emph{effective 
degrees of freedom}. Standard design-effect methodology parameterizes this 
through an effective sample size $k_{\mathrm{eff}}$ satisfying 
$\mathrm{Var}(Y) = k_{\mathrm{eff}} \cdot p(1-p)$. For spatial correlation 
with exponential decay over area $A$ with correlation length $\xi$:
\begin{equation}
k_{\mathrm{eff}} \approx \frac{A}{2\pi\xi^2}.
\label{eq:keff_spatial}
\end{equation}
For temporal correlation with correlation time $\tau$:
\begin{equation}
k_{\mathrm{eff}} \approx \frac{k}{2\tau}.
\label{eq:keff_temporal_simplified}
\end{equation}

This modifies the critical population scale. The exponent 
$\lambda D(c\|1)$ is reduced by the factor $k_{\mathrm{eff}}/k$, yielding:
\begin{equation}
n_{\mathrm{crit}}^{\mathrm{corr}} \lesssim \sqrt{\lambda} 
\exp\!\left(\frac{k_{\mathrm{eff}}}{k} \cdot \lambda D(c\|1)\right).
\label{eq:ncrit_correlation}
\end{equation}

The practical impact is severe. City-scale location monitoring with 
$k = 10{,}000$ cells and correlation length $\xi = 500$m yields 
$k_{\mathrm{eff}} \approx 64$, reducing critical populations by over two 
orders of magnitude. Year-long daily observations ($k = 365$) with 
$\tau = 30$ days yields $k_{\mathrm{eff}} \approx 6$, making reliable 
surveillance of even small groups challenging.

\begin{remark}[Scope: Heuristic Approximations]
\label{rem:heuristic_correlation}
This analysis uses heuristic approximations rather than rigorous 
large-deviation theory. The effective degrees of freedom approach captures 
the dominant effect (variance inflation) but does not constitute a formal 
theorem. The qualitative conclusion that positive correlation accelerates 
false-alert saturation is robust and well established in the literature on 
design effects, effective sample size, and correlation-adjusted multiple 
testing. Full technical details, derivations, and worked examples appear 
in Appendix~\ref{app:correlation}.
\end{remark}

\section{Discussion}

\subsection{Bayesian Posterior Reliability and the Base-Rate Trap}
\label{sec:bayesian}

The preceding analysis focused on frequentist system reliability: the probability 
that at least one innocent individual is flagged. However, practitioners ultimately 
need the \emph{posterior probability} that a flagged person is truly a target. 
This Bayesian perspective reveals an even more stringent constraint: in the 
sparse-target regime (where the expected number of true targets $r$ is small 
relative to population size), posterior reliability degrades once $nq$ becomes 
comparable to $rs$, and collapses when $nq \gg rs$. In this regime, flags become 
epistemically meaningless well before the frequentist transition at $nq \sim 1$. 

\subsubsection{The Bayesian Framework and Positive Predictive Value}

We introduce standard notation from diagnostic testing and forensic statistics \cite{balding2002}:
\begin{align*}
\pi &\equiv \Pr(\text{target}) \quad \text{(base rate)}, \\
s &\equiv \Pr(\text{flag}\mid\text{target}) \quad \text{(sensitivity)}, \\
q &\equiv \Pr(\text{flag}\mid\text{innocent}) \quad \text{(false positive rate)}.
\end{align*}

By Bayes' rule, the \emph{positive predictive value (PPV)} is
\begin{equation}
\label{eq:PPV}
\Pr(\text{target}\mid \text{flag}) 
= \frac{s\,\pi}{s\,\pi + q\,(1-\pi)}.
\end{equation}

In a screened population of size $n$ with $r$ true targets ($\pi = r/n$), the 
expected number of flagged individuals is:
\begin{equation*}
\mathbb{E}[\#\text{ flags}] \approx r\,s + n\,q.
\end{equation*}

Thus, the expected false discovery rate (FDR) is
\begin{equation}
\label{eq:FDR}
\mathrm{FDR} \approx \frac{n\,q}{r\,s + n\,q},
\qquad
\mathrm{PPV} = 1 - \mathrm{FDR} \approx \frac{r\,s}{r\,s + n\,q},
\end{equation}
matching \eqref{eq:PPV} exactly.

\begin{remark}[Sensitivity Dependence on Threshold]
\label{rem:sensitivity_threshold}
The analysis above treats sensitivity $s$ as constant, but in practice 
$s = \Pr(\text{flag} \mid \text{target})$ typically decreases as the 
threshold $m$ increases: stricter criteria miss a larger fraction of true 
targets. A simple parametric model capturing this tradeoff is
\begin{equation}
s(m) = s_{\max} \exp\!\bigl(-\beta (m - \lambda_{\mathrm{signal}})_+\bigr),
\label{eq:s_of_m}
\end{equation}
where $s_{\max}$ is sensitivity at low thresholds, 
$\lambda_{\mathrm{signal}} > \lambda = kp$ is the expected number of 
matching attributes for a true target, $\beta > 0$ controls the decay rate, 
and $(x)_+ = \max(x, 0)$. The exponential form reflects the common 
observation that match-score distributions exhibit approximately exponential 
tails, though any monotone decreasing form would yield the same qualitative 
conclusions.

Substituting \eqref{eq:s_of_m} into the PPV formula \eqref{eq:FDR} reveals 
competing effects as $m$ increases:
\begin{itemize}
\item \textbf{False positives decrease:} for innocent individuals, 
$q(m) = \Pr(Y \ge m)$ with $Y \sim \mathrm{Poisson}(\lambda)$ and 
$\lambda = kp$, and Lemma~\ref{lem:poisson_upper} shows that $q(m)$ 
decays very rapidly once $m > \lambda$ (indeed, faster than any fixed-rate 
exponential in $m$).
\item \textbf{True positives decrease:} $s(m)$ remains near $s_{\max}$ 
for $m \le \lambda_{\mathrm{signal}}$ but then decays exponentially 
for $m > \lambda_{\mathrm{signal}}$.
\end{itemize}

These effects create an \emph{intermediate regime} in which PPV may 
\emph{plateau} or improve more slowly than the constant-$s$ analysis 
suggests, particularly once $m$ exceeds $\lambda_{\mathrm{signal}}$ and 
losses in sensitivity offset some of the gains from reduced false positives. 
Under the model \eqref{eq:s_of_m}, however, Lemma~\ref{lem:poisson_upper} 
implies that $q(m)$ eventually decays much faster than $s(m)$, so 
$q(m)/s(m) \to 0$ and therefore $\mathrm{PPV}(m) \to 1$ as $m \to \infty$. 
Any plateau or dip in PPV can therefore occur only over this intermediate 
range of thresholds, not asymptotically.

Because sensitivity is bounded above by $s_{\max}$, the constant-$s$ model 
used earlier provides an \emph{upper bound} on achievable PPV for any given 
threshold: $\mathrm{PPV}(m)$ is increasing in $s$, and $s(m) \le s_{\max}$ 
for all $m$. Optimal threshold selection ultimately requires specifying both 
the innocent and target match distributions---equivalently, full ROC curve 
analysis \cite{fawcett2006}. Since the signal distribution is 
application-dependent, we do not pursue this direction here.
Of course, driving $m$ to extremely large values also drives the overall flag 
rate toward zero, so the $\mathrm{PPV}(m) \to 1$ limit is primarily of 
theoretical interest; in practice, the operationally relevant regime is the 
intermediate range of thresholds where nontrivial detection rates are maintained.
\end{remark}

\subsubsection{Posterior Reliability and Bayesian Critical Scales}

In the large-deviation setting of Theorem~\ref{thm:critical_pop}, with 
threshold $m = c\lambda$ and $c > 1$, the false positive rate satisfies
\begin{equation*}
q \asymp \kappa(\lambda)\,e^{-\lambda D(c\|1)}, 
\qquad 
\kappa(\lambda) = O(\lambda^{-1/2}),
\end{equation*}
up to subexponential prefactors. Combining this with \eqref{eq:FDR} yields an 
explicit condition for maintaining actionable posterior probabilities.

\begin{remark}[Sparse-Target Regime]
The following proposition applies to the ``needle in a haystack'' 
setting where $r$ (the expected number of true targets) is fixed or grows slowly while population $n$ increases. If instead 
$r \propto n$ (constant prevalence), then the ratio $r/n$ is fixed, and the 
condition for actionable PPV reduces to a bound on $q/s$ relative to 
$\alpha/(1-\alpha)$, independent of $n$. The sparse-target regime is the most 
challenging case for screening systems and is therefore the primary focus of 
this analysis.
\end{remark}

\begin{proposition}[Bayesian Critical Population for Actionable PPV]
\label{prop:bayes_critical}
Fix a desired posterior level $\alpha \in (0,1)$ (e.g., $\alpha=0.9$) and 
sensitivity $s \in (0,1]$. Let $r$ denote the expected number of true targets 
in the population (so the base rate is $\pi=r/n$). In the sparse-target 
regime where $r$ is fixed or grows sublinearly with $n$, the condition
\begin{equation*}
\Pr(\text{target}\mid\text{flag}) \ge \alpha
\end{equation*}
is satisfied whenever
\begin{equation}
\label{eq:bayes_n_condition_exact}
n \le \frac{(1-\alpha)\,r\,s}{\alpha\,q}.
\end{equation}

Under the large-deviation scaling $q \asymp \kappa(\lambda)e^{-\lambda D(c\|1)}$, 
the \emph{Bayesian critical population size} satisfies
\begin{equation}
\label{eq:bayes_critical_scale}
n_{\mathrm{crit}}^{\mathrm{Bayes}}(\alpha,s)
\asymp
\frac{(1-\alpha)\,r\,s}{\alpha}\,\sqrt{\lambda}\,
\exp\!\big(\lambda D(c\|1)\big),
\end{equation}
where $\asymp$ hides subexponential factors absorbed into $\kappa(\lambda)$.
\end{proposition}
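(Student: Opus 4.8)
The plan is to read the result directly off the exact Bayes identity \eqref{eq:PPV} together with the large‑deviation tail estimate of Theorem~\ref{thm:critical_pop}; no new probabilistic input is required, only careful bookkeeping of the sparse‑target approximation. First I would substitute the base rate $\pi = r/n$ into \eqref{eq:PPV}, giving the exact posterior $\Pr(\text{target}\mid\text{flag}) = sr/\big(sr + q(n-r)\big)$ for a flagged individual, and note that this is strictly decreasing in $n$ for fixed $r,s,q$, so the admissible population sizes form an interval $[1,n_{\max}]$ and it suffices to locate its endpoint. Rearranging $\Pr(\text{target}\mid\text{flag}) \ge \alpha$ (the denominator is positive, so cross‑multiplication is safe) yields $(1-\alpha)sr \ge \alpha q(n-r)$, i.e.\ $n \le r + \tfrac{(1-\alpha)rs}{\alpha q}$.

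Second, I would pass to the sparse‑target regime. Since $r$ is fixed or grows sublinearly while $q \to 0$, the additive $r$ is negligible against $\tfrac{(1-\alpha)rs}{\alpha q}$, their ratio being $\tfrac{\alpha q}{(1-\alpha)s}\to 0$; hence the simpler bound \eqref{eq:bayes_n_condition_exact}, $n \le \tfrac{(1-\alpha)rs}{\alpha q}$, is a bona fide sufficient condition, because any $n$ satisfying it also satisfies $n \le r + \tfrac{(1-\alpha)rs}{\alpha q}$ and therefore $\Pr(\text{target}\mid\text{flag})\ge\alpha$. Equivalently, one can argue directly from the expected‑count form \eqref{eq:FDR}, $\mathrm{PPV}\approx rs/(rs+nq)$ — which is precisely the $n-r\approx n$ approximation — and solving $rs/(rs+nq)\ge\alpha$ reproduces \eqref{eq:bayes_n_condition_exact} exactly.

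Third, I would insert the tail asymptotics. By Theorem~\ref{thm:critical_pop} (equivalently the precise large‑deviation statement in Lemma~\ref{lem:poisson_upper}), with $m=\ceil{c\lambda}$ and $c>1$ one has $q \asymp \lambda^{-1/2} e^{-\lambda D(c\|1)}$, i.e.\ $q = \kappa(\lambda)e^{-\lambda D(c\|1)}$ with $\kappa(\lambda)\asymp\lambda^{-1/2}$, so $1/q \asymp \sqrt{\lambda}\,e^{\lambda D(c\|1)}$. Substituting into \eqref{eq:bayes_n_condition_exact} gives $n_{\mathrm{crit}}^{\mathrm{Bayes}}(\alpha,s) = \tfrac{(1-\alpha)rs}{\alpha q} \asymp \tfrac{(1-\alpha)rs}{\alpha}\sqrt{\lambda}\,e^{\lambda D(c\|1)}$, which is \eqref{eq:bayes_critical_scale}.

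The computation is elementary; the only points needing care are (i) confirming that the $n-r\mapsto n$ replacement and the dropping of the additive $r$ are legitimate in the stated regime — both reduce to $q \ll (1-\alpha)s/\alpha$, which holds for all large $\lambda$ since $q$ is exponentially small — and (ii) ensuring the $\sqrt{\lambda}$ prefactor in \eqref{eq:bayes_critical_scale} is honest rather than an artifact of a one‑sided bound: this requires the \emph{two‑sided} matching estimate $\kappa(\lambda)\asymp\lambda^{-1/2}$ (lower bound from the Robbins/Stirling argument in the proof of Theorem~\ref{thm:critical_pop}, sharpened upper bound from summing the Poisson tail beyond the crude $q\le e^{-\lambda D}$ of Lemma~\ref{lem:poisson_upper}), rather than merely the Chernoff inequality alone. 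Everything else is direct substitution.
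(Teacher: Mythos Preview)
Your proposal is correct and follows essentially the same route as the paper: the paper works directly from the approximate identity \eqref{eq:FDR}, rearranges $\mathrm{PPV}\ge\alpha$ to obtain \eqref{eq:bayes_n_condition_exact}, and then substitutes the large-deviation scaling for $q$. You add a bit more care by starting from the exact Bayes formula with $n-r$ and explicitly justifying the drop of the additive $r$ and the two-sided $\sqrt{\lambda}$ prefactor, but this is the same argument fleshed out.
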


\begin{proof}
From \eqref{eq:FDR}, $\mathrm{PPV} \ge \alpha$ iff
$r\,s \ge \alpha(r\,s + n\,q)$.
Rearranging gives
$r\,s (1-\alpha) \ge \alpha\,n\,q$,
which yields \eqref{eq:bayes_n_condition_exact}. Substituting the 
large-deviation scaling for $q$ proves \eqref{eq:bayes_critical_scale}.
\end{proof}

When sensitivity varies with threshold as in Remark~\ref{rem:sensitivity_threshold}, 
the bound \eqref{eq:bayes_critical_scale} should be interpreted as an upper bound 
on achievable Bayesian reliability, since $s(m) \le s_{\max}$ for all $m$.

\begin{remark}[The Bayesian Trap]
Frequentist reliability deteriorates once $nq \sim 1$, when the system is 
likely to produce at least one false alert. Bayesian actionability demands the 
stronger condition $nq \ll rs$. When $rs > 1$, there is an intermediate regime 
where false alerts occur frequently but individual flags retain some evidential 
value. When $rs < 1$ (extremely sparse targets), posterior reliability collapses 
before the system becomes statistically unreliable. In all cases, if $nq \gg rs$, 
posterior probabilities decay toward zero even when individual false positives 
remain rare.
\end{remark}

\subsubsection{Likelihood Ratios and Classical Fallacies}

Analysts often cite tiny false positive rates $q$ or large likelihood ratios
\begin{equation*}
L = \frac{s}{q},
\end{equation*}
and mistakenly infer that $\Pr(\text{target}\mid\text{flag})$ is therefore large. 
This is the classical \emph{prosecutor's fallacy} \cite{balding2002}. Bayes' rule 
shows:
\begin{equation}
\label{eq:posterior_odds}
\frac{\Pr(\text{target}\mid \text{flag})}{\Pr(\text{innocent}\mid \text{flag})}
=
\frac{s}{q} \cdot \frac{\pi}{1-\pi}
=
L \times \text{prior odds}.
\end{equation}

When the base rate $\pi=r/n$ is small, the prior odds can overwhelm any fixed 
likelihood ratio. Even extremely rare false positives ($q \ll 1$) do not guarantee 
high PPV. When $\pi \ll q$, most flagged individuals remain innocent despite 
individually low false positive rates.

\subsubsection{Resolving the DNA Database Controversy}

Forensic statisticians have long debated the evidential value of DNA ``cold hits.'' 
Stockmarr \cite{stockmarr1999} argued that searching databases weakens evidence by inflating 
coincidental match probabilities; Balding \cite{balding1995,balding2002} countered that 
likelihood ratios preserve evidential weight. Our framework resolves this 
apparent contradiction by identifying the relevant asymptotic regime.

Using \eqref{eq:posterior_odds}, the posterior odds after a database match are:
\begin{equation*}
\frac{\Pr(\text{guilty}\mid\text{match})}{\Pr(\text{innocent}\mid\text{match})}
= \frac{s}{q} \cdot \frac{r/n}{1-r/n}.
\end{equation*}

\textbf{The DNA regime.}  
Standard STR genotype profiling yields match probabilities on the order of 
$q \sim 10^{-12}$ or smaller. Even with databases of size $n \sim 10^{6}$, the 
product $nq \sim 10^{-6} \ll 1$ remains far below the critical scale where 
coincidental matches become probable.

In this extreme regime, likelihood ratios dominate the posterior odds. The prior 
odds $r/n$ may be small (perhaps $10^{-6}$ if we have one suspect among a million), 
but the likelihood ratio $s/q \sim 10^{12}$ is so enormous that posterior 
probabilities remain overwhelmingly high. This validates Balding's argument: 
database size does not meaningfully dilute evidential weight when $nq \ll 1$.

\textbf{The surveillance regime.}  
Multi-attribute surveillance systems operate in a fundamentally different regime. 
With $\lambda = 5$ and threshold multiplier $c = 3$ (i.e., $m = 15$), 
we have $q \sim 10^{-4}$ (Section~\ref{sec:fundamental}). For populations of 
$n \sim 10^{6}$, the product $nq \sim 100 \gg 1$ places the system far above 
the critical scale.

In this regime, prior odds collapse faster than likelihood ratios can compensate. 
Even if the likelihood ratio $s/q$ is substantial, the prior odds $r/n$ are so 
unfavorable that posterior probabilities remain low. Stockmarr's caution applies: 
match evidence loses evidential weight as search populations grow.

\textbf{Resolution.}
The critical scale $n_{\mathrm{crit}} \sim \exp(\lambda D)$ separates these 
regimes. Stockmarr and Balding are both correct in their respective 
contexts: DNA forensics operates where $nq \ll 1$ (likelihood-driven), while 
multi-attribute surveillance operates where $nq \gg 1$ (base-rate dominated). 
The apparent contradiction dissolves once we recognize this regime distinction.

\subsubsection{Key Takeaways}

\begin{enumerate}
\item \textbf{Bayesian scaling mirrors frequentist scaling:} Bayesian actionability 
inherits the exponential factor $e^{\lambda D(c\|1)}$ but includes the additional 
multiplicative factor $(1-\alpha)rs/\alpha$; see \eqref{eq:bayes_critical_scale}.

\item \textbf{Posterior collapse can precede frequentist failure:} In the sparse-target 
regime, posterior reliability collapses once $nq$ approaches $rs$, often well before 
the frequentist transition at $nq \sim 1$.

\item \textbf{Exponential data growth overwhelms adaptation:} Reducing $q$ by a factor 
of $\beta$ requires increasing $k$ by only $\log(1/\beta)/(pD)$, while real-world data 
growth is exponential: $k(t)=k_0\gamma^{t}$ (Section~\ref{sec:temporal}). Thus 
posterior collapse is temporally inevitable.

\item \textbf{Epistemic saturation:} As $n$ grows, base rates $\pi=r/n$ shrink. Even 
rare false positives become dominated by prior odds, causing $\Pr(\text{target}\mid\text{flag})$ 
to decay toward zero.

\item \textbf{Resolution of the DNA debate:} Stockmarr and Balding are correct in 
different regimes: Balding for $nq \ll 1$ (DNA), Stockmarr for $nq \gtrsim 1$ (large-scale 
attribute screening).
\end{enumerate}

\begin{remark}[Connections to Classical Statistical Fallacies]
This section unifies the base-rate fallacy, the prosecutor's fallacy, false discovery 
rate control \cite{benjamini1995}, and the PPV problem in medical screening 
\cite{welch2011}. Conceptually, these phenomena are identical: all reflect Bayes' 
rule under low prevalence and imperfect specificity. The widely cited argument that 
``most published research findings are false'' \cite{ioannidis2005} is the same FDR/PPV 
problem in another domain.
\end{remark}

\subsection{Connections to Algorithmic Fairness}
\label{sec:fairness}

The Group Dominance Effect (Theorem~\ref{thm:dominance}) has important 
implications for fairness in surveillance systems. When different population 
groups experience differential surveillance exposure, small differences in 
exposure rates create exponential disparities in outcomes. 
Proposition~\ref{prop:exposure_amplification} shows that exposure ratios of 2--4 times generate false alert disparities exceeding 20 times near critical thresholds. This exponential amplification, which arises from Poisson tail behavior, means that even modest differences in surveillance intensity produce severe outcome inequalities.

Crucially, these disparities cannot be eliminated through threshold 
adjustment. Group-specific thresholds merely encode the underlying 
exposure inequality in a different form. Equalizing outcomes requires 
equalizing data collection intensity at the source, not algorithmic 
tuning. Moreover, since the high-exposure group drives system-level 
false alerts, aggregate reliability metrics obscure concentrated burdens 
on specific subpopulations, making demographic disaggregation essential 
for understanding actual system performance.

\begin{remark}[Structural vs.\ Algorithmic Bias]
\label{rem:structural_bias}
Proposition~\ref{prop:exposure_amplification} demonstrates that disparate 
outcomes arise from the \emph{probabilistic structure} of screening systems, 
independent of algorithmic design choices. When different groups experience 
differential surveillance exposure rates ($p_1 \neq p_2$), this mathematically 
guarantees unequal false positive rates ($q_1 \neq q_2$) under any common 
threshold $m$, creating disproportionate false alert burdens through Poisson 
tail behavior.

The exponential amplification in part (1) is particularly striking: when both 
groups are screened using the same attribute set and threshold, small 
differences in exposure translate to exponential differences in false alert 
rates (Figure~\ref{fig:phase_transitions}(d)). 

The effect manifests temporally as different groups reach critical false-alert 
rates at different times: Group~2 with exposure rate $p_2 = 0.020$ fails at 
$k_2^* = 350$ attributes, while Group~1 with $p_1 = 0.005$ remains reliable 
until $k_1^* = 1400$. While this fourfold difference in system lifetime simply 
reflects the fourfold difference in exposure rates (a linear relationship), the 
amplification becomes exponential when comparing simultaneous false alert rates: 
at any intermediate $k \in (350, 1400)$, Group~2 experiences exponentially more 
false alerts than Group~1.

\textbf{Connection to algorithmic fairness.} 
These findings relate directly to classical impossibility theorems in the 
algorithmic fairness literature. Kleinberg et al.\ \cite{kleinberg2017} and 
Chouldechova \cite{chouldechova2017} show that equalizing false positive 
rates, false negative rates, and calibration is impossible when base rates 
differ. Our analysis identifies a complementary, and more fundamental, 
mechanism: \emph{surveillance exposure itself creates different effective base 
rates} ($\lambda_g = k p_g$), guaranteeing unequal false alert burdens even 
\emph{before} any classifier is applied.

Standard fairness interventions operate at the classifier level: equalized odds 
\cite{hardt2016}, demographic parity, and calibration attempt to constrain 
predictions. None can correct the structural disparity we identify, because it 
arises from data collection intensity ($p_g$), not from how a classifier 
processes the collected data. Group-specific thresholds $m_g$ might equalize $q_1$ and $q_2$, but only by encoding exposure inequality directly. Achieving parity requires assigning a higher threshold to the more surveilled group.

This perspective connects to ``fairness through awareness'' \cite{dwork2012} and formalizes ``structural bias'' arguments from the critical algorithm studies literature \cite{eubanks2018,benjamin2019,mehrabi2021}. Disparities can be intrinsic to systems built on heterogeneous data collection, rather than artifacts of biased algorithms or training data.
\end{remark}

\begin{remark}[Policy Implications]
These results suggest that surveillance system audits should:
\begin{enumerate}
\item Measure exposure rates ($p_g$) across demographic and geographic groups, not just aggregate false alert rates.
\item Recognize that \emph{system reliability is bounded by the worst-performing group} (Theorem~\ref{thm:dominance}), making demographic disaggregation essential.
\item Account for temporal dynamics: groups with higher exposure fail first as data accumulates, creating windows of maximum disparity.
\item Acknowledge that threshold adjustments cannot eliminate disparities arising from differential exposure; only equalizing $p_g$ across groups can achieve fairness.
\end{enumerate}
\end{remark}

\subsection{Limitations and Future Directions}
\label{sec:limitations}

This analysis operates under several simplifying assumptions that define its scope and suggest natural directions for future work.

\textbf{Independence across individuals.} Our core results (Theorems~\ref{thm:critical_pop} and~\ref{thm:system_lifetime}) assume statistical independence of match counts across individuals (Remark~\ref{rem:independence}). Common-mode events (mass gatherings, natural disasters, viral social media content, coordinated activities) introduce positive dependence that would \emph{increase} false alert rates beyond our bounds. Positive dependence inflates upper-tail probabilities and therefore worsens system reliability relative to the independent case. Extensions via positively associated random variables or Chen--Stein methods \cite{barbour2005} could quantify these effects, but our independence-based analysis provides a lower bound on false alert rates.

\textbf{Binary attributes.} We model attributes as binary indicators (match/no-match). Continuous features, count data, or multi-level categorical attributes would require different distributional assumptions and tail bounds. The qualitative insights about combinatorial explosion in high-dimensional spaces should persist, but quantitative thresholds would differ. Extensions to Gaussian or sub-Gaussian attributes would preserve the essential exponential tail behavior underlying our critical-scale results.

\textbf{Fixed thresholds.} Our analysis assumes detection thresholds $m$ are fixed at deployment time. Adaptive systems that adjust thresholds based on observed alert rates or estimated base rates could potentially extend operational lifetimes. However, Theorem~\ref{thm:system_lifetime} suggests fundamental limits: under exponential data growth, even optimally adaptive thresholds would need to grow exponentially to maintain reliability, eventually exceeding meaningful detection capabilities.

\textbf{Constant sensitivity.} The Bayesian analysis (Section~\ref{sec:bayesian}) initially treats sensitivity $s = \Pr(\text{flag} \mid \text{target})$ as constant. Remark~\ref{rem:sensitivity_threshold} introduces a simple threshold-dependent model, but a complete analysis would require specifying both the target match distribution and the full signal distribution of true targets and performing ROC optimization \cite{fawcett2006}. This is inherently application-dependent and beyond our current scope.

\textbf{Heuristic correlation treatment.} The effective degrees of freedom 
approach (Section~\ref{sec:spatialtemporal}; Appendix~\ref{app:correlation}) 
captures variance inflation but does not constitute rigorous large-deviation 
analysis. Formal treatment would require specifying mixing conditions or 
dependency graph structures~\cite{dembo2010,barbour2005}. Our heuristic 
provides qualitative guidance rather than formal guarantees.

\textbf{Lack of empirical validation.} We have not validated our predictions against operational surveillance data, which is typically proprietary, classified, or subject to confidentiality restrictions. Instead, we use proxy datasets for illustrative rather than validating analyses (Appendix~\ref{app:empirical}); these provide qualitative checks but do not constitute validation in the intended deployment environment.

\textbf{Static population structure.} We assume fixed population composition with stable group sizes $n_g$ and exposure rates $p_g$. Dynamic populations with entry, exit, demographic shifts, and changing surveillance intensity would require stochastic population models. The temporal analysis (Section~\ref{sec:temporal}) addresses data growth but not population dynamics.

\textbf{Formal versus heuristic results.} Theorems~\ref{thm:critical_pop}, 
\ref{thm:system_lifetime}, and~\ref{thm:dominance}, along with 
Propositions~\ref{prop:sharp_threshold} and~\ref{prop:exposure_amplification}, 
are formal results with complete proofs. The effective dimensionality 
analysis (Appendix~\ref{app:correlation}) and sensitivity-threshold model 
(Remark~\ref{rem:sensitivity_threshold}) are heuristic approximations.

\textbf{Broader applicability.} While surveillance systems motivated this analysis, the mathematical framework applies to any domain where threshold rules screen large collections across many low-probability binary indicators. The critical population bounds (Theorem~\ref{thm:critical_pop}), temporal saturation dynamics (Theorem~\ref{thm:system_lifetime}), and group-level disparity amplification (Theorem~\ref{thm:dominance}) characterize generic properties of high-dimensional threshold detection, independent of the specific application. Natural extensions include network intrusion detection, manufacturing quality control, financial fraud screening, medical diagnostic panels, and environmental monitoring systems. The binary indicator assumption could be relaxed to accommodate hybrid frameworks combining discrete and continuous variables~\cite{wang2023}, though the essential combinatorial explosion in threshold-based screening would persist. We developed the theory through the surveillance lens because it offered the clearest exposition of the societal stakes, but the probabilistic limits derived here constrain any system that aggregates rare coincidences across high-dimensional attribute spaces.

Despite these limitations, the core mathematical structure (exponential scaling of critical populations, finite system lifetimes under data growth, and structural amplification of exposure disparities) should prove robust across modeling variations. Modeling refinements would shift numerical thresholds but not the qualitative scaling laws, which arise from intrinsic high-dimensional coincidence phenomena.

\section{Conclusions}

We have established sharp probabilistic limits, under standard rare-event and independence (or effective-independence) assumptions, for large-scale screening systems. The critical population size beyond which false alerts become inevitable scales as
\[
n_{\mathrm{crit}} \sim \sqrt{\lambda}\exp(\lambda D(c\|1)),
\]
governed by large-deviation rate functions that cannot be circumvented through algorithmic refinement within these models. When data volume grows exponentially, $k(t) = k_0\gamma^{t}$, any screening system has a calculable operational lifetime
\[
T^{*} \approx \frac{\log m - \log(k_{0}p)}{\log \gamma},
\]
typically measured in years rather than decades. Threshold adjustments provide only logarithmic protection against exponential growth, making temporal failure inevitable rather than merely possible.

The Bayesian analysis reveals an even more stringent constraint: posterior probabilities collapse when $nq \gg rs$, rendering flags epistemically meaningless well before frequentist reliability fails. This asymptotic framing clarifies the long-standing DNA database controversy: Stockmarr's caution and Balding's confidence apply in distinct regimes, separated by the critical scale $n \sim \exp(\lambda D)$.

Differential surveillance exposure creates exponential outcome disparities whenever common thresholds are applied across groups with heterogeneous match probabilities $p_g$. The mathematics guarantees disproportionate false-alert burdens through Poisson tail behavior. This is not an artifact of algorithmic bias but a structural inevitability arising from heterogeneous data collection.

These results impose non-negotiable constraints on system design. Operational surveillance systems require: (1) explicit expiration dates calculated from data growth rates; (2) demographic exposure audits measuring $p_g$ across population groups; (3) capacity constraints limiting investigable flags; and (4) recognition that system-wide reliability is dominated by the subpopulation with the highest effective match probability. The mathematics is unforgiving: more data does not guarantee better decisions, and exponential growth ensures finite operational windows.

The policy implication is direct: screening systems operating near or beyond their critical parameters will generate false alerts regardless of implementation quality. Designers must either accept these mathematical limits or fundamentally restructure surveillance architectures to avoid accumulating correlated data over time. Threshold refinement and algorithmic optimization cannot solve problems rooted in probabilistic inevitability.

\section*{Acknowledgments}

We acknowledge the support of the Natural Sciences and Engineering Research Council of Canada (NSERC), funding reference number RGPIN-2019-04085.

\section*{Data Availability}

The illustrative examples in Appendix~\ref{app:empirical} use two 
publicly available datasets: the UCI Adult Census dataset 
(\url{https://archive.ics.uci.edu/dataset/2/adult}) and the City of Chicago 
Crime dataset (\url{https://data.cityofchicago.org/Public-Safety/Crimes-One-year-prior-to-present/x2n5-8w5q}). 
No new data were generated.

\section{Notation Summary}
\label{sec:notation}

\begin{tabular}{@{}ll@{}}
\toprule
\textbf{Symbol} & \textbf{Description} \\
\midrule
$n$ & Population size (number of individuals screened) \\
$k$ & Total number of binary attributes monitored (before dependence) \\
$p$ & Marginal per-attribute match probability for an innocent individual \\
$\lambda$ & Expected false matches under independence; $\lambda = kp$ \\
$m$ & Detection threshold (individual flagged if $X_i \ge m$) \\
$c$ & Threshold multiplier; typically $m \approx c\lambda$ for $c>1$ \\
$q$ & Per-person false alert probability; $q = \Pr(X \ge m \mid \text{innocent})$ \\
$D(\alpha \| 1)$ & Poisson rate function; $D(\alpha\|1) = \alpha\log\alpha - \alpha + 1$ \\
$n_{\mathrm{crit}}$ & Population size where expected false alerts reach order one \\
$s$ & Sensitivity (true positive rate); $s = \Pr(\text{flag} \mid \text{target})$ \\
$r$ & Expected number of true targets in the population \\
$\pi$ & Base rate (prevalence); $\pi = r/n$ \\
$p_g$ & Marginal match probability for group $g$ \\
$n_g$ & Size of group $g$ \\
$\lambda_g$ & Expected false matches for group $g$; $\lambda_g = k p_g$ \\
$q_g$ & Per-person false alert probability for group $g$ \\
$k_{\mathrm{eff}}$ & Effective number of independent attributes accounting for correlation \\
$\gamma$ & Data growth factor per unit time \\
$T^*$ & System lifetime; time at which $\lambda(t)$ reaches the threshold $m$ \\
$X_i$ & Number of matching attributes for individual $i$ \\
$\alpha$ & Posterior probability threshold in Bayesian analysis \\
$\xi$ & Spatial correlation length \\
$\tau$ & Temporal correlation time \\
\bottomrule
\end{tabular}

\appendix

\section{Detailed Proofs}
\label{app:proofs}

This appendix provides complete proofs with all intermediate steps for the main theoretical results.

\subsection{Proof of Theorem~\ref{thm:critical_pop} (Critical Population Scale)}
\label{app:proof_critical_pop}

Recall that $Y \sim \mathrm{Poisson}(\lambda)$, $\lambda = kp$, and 
$m = \lceil c\lambda \rceil$ with $c>1$. We write 
$q = \Pr(Y \ge m)$ for the Poisson upper tail. Replacing $m$ by $c\lambda$ 
(ignoring integer rounding) affects only multiplicative constants in the bounds 
and does not change the exponential rate.

\medskip
\textbf{Upper bound on $q$.}
Lemma~\ref{lem:poisson_upper} (Chernoff bound for Poisson tails) applied with 
$m = c\lambda$ yields
\[
q = \Pr(Y \ge m) \le \exp(-\lambda D(c\|1)),
\]
where the rate function is
\[
D(c\|1) = c\log c - c + 1.
\]

\medskip
\textbf{Lower bound on $q$.}
We obtain a lower bound on $\Pr(Y = m)$ using Robbins' refinement of Stirling's
formula \cite{robbins1955}:
\[
m! < \sqrt{2\pi m}\left(\frac{m}{e}\right)^m e^{1/(12m)}.
\]
Hence
\begin{align*}
\Pr(Y = m)
&= e^{-\lambda}\frac{\lambda^m}{m!} \\
&\ge \frac{e^{-\lambda}\lambda^m}{\sqrt{2\pi m}\left(\frac{m}{e}\right)^m e^{1/(12m)}} \\
&= \frac{1}{\sqrt{2\pi m}}
   \exp\!\Big(-\lambda + m\log\lambda - m\log m + m - \tfrac{1}{12m}\Big).
\end{align*}
Now set $m = c\lambda$ and simplify the exponent:
\begin{align*}
-\lambda + m\log\lambda - m\log m + m
&= -\lambda + c\lambda\log\lambda - c\lambda\log(c\lambda) + c\lambda \\
&= -\lambda + c\lambda\log\lambda - c\lambda\log c - c\lambda\log\lambda + c\lambda \\
&= -\lambda - c\lambda\log c + c\lambda \\
&= -\lambda\big(1 + c\log c - c\big) \\
&= -\lambda\big(c\log c - c + 1\big) \\
&= -\lambda D(c\|1).
\end{align*}
Substituting this back, and using $m = c\lambda$, we obtain
\[
\Pr(Y=m)
\ge
\frac{1}{\sqrt{2\pi c\lambda}}\,
\exp\!\Big(-\lambda D(c\|1) - \tfrac{1}{12c\lambda}\Big).
\]
Since $q = \Pr(Y \ge m) \ge \Pr(Y = m)$, the same lower bound applies to $q$,
establishing the two-sided tail estimate in \eqref{eq:poisson-tail-two-sided}:
\[
\frac{1}{\sqrt{2\pi c\lambda}}\,
\exp\!\Big(-\lambda D(c\|1) - \tfrac{1}{12c\lambda}\Big)
\;\le\; q \;\le\; \exp(-\lambda D(c\|1)).
\]

\medskip
\textbf{System-level bounds.}
Independence across $n$ screened individuals implies
\[
\Pr(\text{no false alert}) = (1-q)^n,
\qquad
\Pr(\text{false alert}) = 1 - (1-q)^n.
\]
To relate $(1-q)^n$ to exponentials, we use standard inequalities for 
$x \in (0,1)$:
\[
\ln(1-x) \le -x
\quad\text{and}\quad
\ln(1-x) \ge -\frac{x}{1-x}.
\]
The first inequality follows from the concavity of $\ln$ and the tangent bound
$\ln(1-x) \le -x$ at $x=0$; the second can be verified by comparing power-series
expansions or by bounding the remainder term in the Taylor series. Exponentiating
and raising to the $n$th power gives
\[
e^{-nx/(1-x)} \le (1-x)^n \le e^{-nx},
\]
and hence, with $x = q$,
\[
e^{-nq/(1-q)} \le (1-q)^n \le e^{-nq}.
\]
Therefore,
\[
1 - e^{-nq} \;\le\; \Pr(\text{false alert}) \;\le\;
1 - e^{-nq/(1-q)},
\]
which are precisely the system-level bounds \eqref{eq:system-bounds} once the 
Poisson tail bounds for $q$ are substituted.

\medskip
\textbf{Explicit form of system bounds.}
Substituting the Poisson tail bounds into the system-level inequalities yields
the explicit form \eqref{eq:system-bounds}. For the lower bound on 
$\Pr(\text{false alert})$, we use $1 - e^{-nq}$ together with the lower bound on $q$:
\[
\Pr(\text{false alert}) \ge 1 - \exp\!\left(
  -\frac{n}{\sqrt{2\pi c\lambda}}\,
   e^{-\lambda D(c\|1) - 1/(12c\lambda)}
\right).
\]
For the upper bound, we use $1 - e^{-nq/(1-q)}$ with the upper bound 
$q \le e^{-\lambda D(c\|1)}$. Since $q/(1-q)$ is increasing in $q$ for $q \in (0,1)$,
\[
\frac{q}{1-q} \le \frac{e^{-\lambda D(c\|1)}}{1 - e^{-\lambda D(c\|1)}},
\]
giving
\[
\Pr(\text{false alert}) \le 1 - \exp\!\left(
  -\frac{n\,e^{-\lambda D(c\|1)}}{1 - e^{-\lambda D(c\|1)}}
\right).
\]

\medskip
\textbf{Critical population scale.}
The system becomes unreliable once we expect on the order of one 
false alert per batch. Using the lower bound on $\Pr(Y=m)$ as a proxy for $q$,
this corresponds to
\[
n\,\Pr(Y=m)
\approx
\frac{n}{\sqrt{2\pi c\lambda}}\,
\exp\!\Big(-\lambda D(c\|1) - \tfrac{1}{12c\lambda}\Big)
\sim 1.
\]
Solving for $n$ gives
\[
n_{\mathrm{crit}} \approx \sqrt{2\pi c\lambda}\,
\exp\!\Big(\lambda D(c\|1) + \tfrac{1}{12c\lambda}\Big).
\]
For large $\lambda$, the correction term $1/(12c\lambda)$ is negligible, and we
obtain the asymptotic form
\[
n_{\mathrm{crit}} \asymp \sqrt{\lambda}\, e^{\lambda D(c\|1)}.
\]
Since the factor $\sqrt{\lambda}$ is subexponential in $\lambda$, the dominant 
scaling is
\[
n_{\mathrm{crit}} \asymp e^{\lambda D(c\|1)} = e^{kp\,D(c\|1)},
\]
which is the critical population size stated in 
Theorem~\ref{thm:critical_pop}.

\subsection{Proof of Proposition~\ref{prop:sharp_threshold} (Threshold Sharpness)}
\label{app:proof_sharp_threshold}

From Theorem~\ref{thm:critical_pop}, we have the two-sided bounds
\[
\frac{1}{\sqrt{2\pi c\lambda}}\,e^{-\lambda D(c\|1) - O(1/\lambda)} 
\le q \le e^{-\lambda D(c\|1)},
\]
which together imply $\log q = -\lambda D(c\|1) + O(\log\lambda)$ as 
$\lambda \to \infty$.

With $n=\sqrt{\lambda}\,e^{\alpha\lambda D(c\|1)}$, we compute
\begin{align*}
\log(nq) &= \log n + \log q \\
&= \tfrac{1}{2}\log\lambda + \alpha\lambda D(c\|1) - \lambda D(c\|1) + O(\log\lambda) \\
&= (\alpha-1)\lambda D(c\|1) + O(\log\lambda).
\end{align*}

For $\alpha < 1$: the leading term $(\alpha-1)\lambda D < 0$ dominates, so 
$nq \to 0$ exponentially fast.

For $\alpha > 1$: the leading term $(\alpha-1)\lambda D > 0$ dominates, so 
$nq \to \infty$ exponentially fast.

The transition width $\Delta\alpha = O(1/(\lambda D))$ vanishes as 
$\lambda \to \infty$. The limiting behavior~\eqref{eq:sharp_threshold} 
follows from $\Pr(\text{false alert}) \approx 1 - e^{-nq}$.

\subsection{Proof of Proposition~\ref{prop:exposure_amplification} 
(Exposure Amplification)}
\label{app:proof_exposure}

\textbf{Part (1):} When $\lambda_2 \ge m$, monotonicity gives
\[
q_2 = \Pr(\text{Poisson}(\lambda_2) \ge m) \ge \Pr(\text{Poisson}(m) \ge m).
\]
For large $m$, normal approximation gives 
$\Pr(\text{Poisson}(m) \ge m) \ge 1/2 - O(m^{-1/2})$.

Meanwhile, $q_1 \le \exp(-\lambda_1 D(m/\lambda_1\|1))$ by 
Lemma~\ref{lem:poisson_upper}, which is exponentially small when 
$m > \lambda_1$. The ratio $q_2/q_1$ therefore grows at least exponentially: 
for any $c > 0$, there exists $M(c)$ such that $q_2/q_1 \ge e^{cm}$ for 
$m \ge M(c)$.

\textbf{Part (2):} Since $q_2/q_1 \to \infty$, we have 
$q_2/(q_1 + q_2) \to 1$.

\textbf{Part (3):} Group~2 reaches $\lambda_2 = m$ at $k = m/p_2 < m/p_1$. 
The disparity is maximal for $k \in (m/p_2, m/p_1)$.

\section{Effective Dimensionality Under Correlation: Technical Details}
\label{app:correlation}

This appendix provides the full technical development of correlation effects 
summarized in Section~\ref{sec:spatialtemporal}.

\subsection{Variance Inflation and Rate Function Reduction}

For binary indicators $X_1,\ldots,X_k$ with $\mathbb{E}[X_i] = p$ and 
pairwise correlation $\mathrm{Corr}(X_i, X_j) = \rho_{ij}$, the sum 
$Y = \sum_{i=1}^k X_i$ has:
\begin{align*}
\mathbb{E}[Y] &= kp \quad \text{(unchanged by correlation)}, \\
\mathrm{Var}(Y) &= kp(1-p) + \sum_{i \neq j} p(1-p) \rho_{ij} 
= kp(1-p)\left(1 + \frac{\sum_{i \neq j} \rho_{ij}}{k}\right),
\end{align*}
where the \emph{design effect} 
$\mathrm{DEFF} = \frac{\sum_{i \neq j} \rho_{ij}}{k}$ quantifies variance 
inflation; for nonnegative average correlation this satisfies 
$\mathrm{DEFF} \ge 0$~\cite{kish1965}.

For the Poisson approximation, independence gives 
$\mathrm{Var}(Y) = \lambda = kp$. Under positive correlation, 
$\mathrm{Var}(Y) > \lambda$, creating heavier tails.

Positive correlation generally reduces the large-deviation rate compared to 
the independent case; see~\cite{dembo2010} for precise conditions. 
Intuitively, the rate reduction can be modeled by shrinking the exponent 
by the factor $k_{\mathrm{eff}}/k$.

\subsection{Effective Degrees of Freedom}

Rather than reducing $\lambda$ itself, correlation reduces the 
\emph{effective degrees of freedom} governing concentration behavior. 
Standard design-effect methodology~\cite{kish1965,bretherton1999} 
parameterizes this through an effective sample size $k_{\mathrm{eff}}$ 
defined such that the variance of the correlated sum equals that of 
$k_{\mathrm{eff}}$ independent observations:
\begin{equation}
\mathrm{Var}(Y) = k_{\mathrm{eff}} \cdot p(1-p).
\end{equation}

For spatial correlation with exponential decay $\rho(r) = e^{-r/\xi}$ over 
a monitored area $A$, the spatial integral of the correlation function 
yields:
\begin{equation}
k_{\mathrm{eff}} \approx \frac{A}{2\pi\xi^2},
\end{equation}
where $\xi$ is the correlation length~\cite{bretherton1999}.

Similarly, for temporal correlation with correlation time $\tau$ and unit 
sampling interval, the Bartlett--Wilks formula~\cite{wilks2019} gives:
\begin{equation}
k_{\mathrm{eff}} \approx 
\frac{k}{1 + 2\sum_{h=1}^{k-1}\rho_{\mathrm{time}}(h)(1-h/k)}.
\end{equation}
For exponential temporal correlation $\rho_{\mathrm{time}}(h) = e^{-h/\tau}$ 
with $\tau \gg 1$:
\begin{equation}
k_{\mathrm{eff}} \approx \frac{k(1-e^{-1/\tau})}{1+e^{-1/\tau}} 
\approx \frac{k}{2\tau}.
\end{equation}

\subsection{Impact on Critical Populations}

We model the effect of correlation by shrinking the effective number of 
independent comparisons from $k$ to $k_{\mathrm{eff}}$, holding the marginal 
match probability $p$ fixed. The true mean number of matches remains 
$\lambda = kp$, but the concentration (and hence the large-deviation 
exponent) behaves as if only $k_{\mathrm{eff}}$ coordinates contributed 
independently.

For thresholds $m = c\lambda$ with $c > 1$, the tail probability under 
correlation satisfies:
\begin{equation}
q_{\mathrm{corr}} \gtrsim \exp\!\left(-k_{\mathrm{eff}} p \cdot D(c\|1)\right) 
= \exp\!\left(-\lambda \cdot \frac{k_{\mathrm{eff}}}{k} \cdot D(c\|1)\right).
\end{equation}

This modifies the critical population scale. The exponent in the independent 
case $\lambda D(c\|1) = kpD(c\|1)$ is reduced by the factor 
$k_{\mathrm{eff}}/k$, yielding:
\begin{equation}
n_{\mathrm{crit}}^{\mathrm{corr}} \lesssim \sqrt{\lambda} 
\exp\!\left(\frac{k_{\mathrm{eff}}}{k} \cdot \lambda D(c\|1)\right) 
= \sqrt{\lambda} \exp\!\left(k_{\mathrm{eff}} p \cdot D(c\|1)\right).
\end{equation}

The critical population is thus \emph{reduced} compared to the independent 
case, making systems fail at smaller populations when positive correlations 
are present. The factor $k_{\mathrm{eff}}/k$ represents the 
\emph{information loss} due to redundancy in correlated observations.

\subsection{Quantitative Examples}

\begin{example}[City-Scale Spatial Surveillance]
Consider fine-grained location monitoring with $k = 10{,}000$ cells covering 
area $A = 100$~km$^2$ with correlation length $\xi = 500$~m. The effective 
number of independent spatial tests is:
\[
k_{\mathrm{eff}} \approx \frac{100 \times 10^6 \text{ m}^2}{2\pi (500)^2} 
\approx 64.
\]
Despite nominally tracking 10,000 locations, correlation reduces effective 
information to approximately 64 independent observations.

With $p = 0.005$, the expected number of matches is 
$\lambda = kp = 10{,}000 \times 0.005 = 50$. Set the threshold at $m = 75$ 
matches (corresponding to $c = m/\lambda = 1.5$). The rate function is:
\[
D(1.5\|1) = 1.5\log(1.5) - 1.5 + 1 \approx 0.108.
\]

\textbf{Independent case:} The exponent governing the critical population is:
\[
\lambda D(c\|1) = 50 \times 0.108 = 5.4,
\]
giving $n_{\mathrm{crit}} \sim \sqrt{\lambda}\exp(5.4) \approx 1{,}560$.

\textbf{Correlated case:} The exponent is reduced by the factor 
$k_{\mathrm{eff}}/k$:
\[
\frac{k_{\mathrm{eff}}}{k} \cdot \lambda D(c\|1) 
= \frac{64}{10{,}000} \times 5.4 \approx 0.035,
\]
giving $n_{\mathrm{crit}}^{\mathrm{corr}} \sim \sqrt{\lambda}\exp(0.035) 
\approx 7$.

Spatial correlation reduces the critical population from approximately 
1,560 to approximately 7, a reduction of over two orders of magnitude.
\end{example}

\begin{example}[Routine Behavior Monitoring]
For daily surveillance data ($\Delta t = 1$ day) over one year 
($k = 365$ observations) monitoring movement patterns with correlation 
time $\tau = 30$ days:
\[
k_{\mathrm{eff}} \approx \frac{k}{2\tau} = \frac{365}{60} \approx 6.
\]
Observing someone's location 365 times yields effective information 
equivalent to approximately 6 independent observations.

With per-observation match probability $p = 0.02$, the expected matches 
are $\lambda = 365 \times 0.02 = 7.3$. Set threshold $m = 12$ 
(giving $c \approx 1.64$). The rate function is 
$D(1.64\|1) \approx 0.173$.

Independent case: $\lambda D(c\|1) \approx 1.26$, giving 
$n_{\mathrm{crit}} \approx 10$.

Correlated case: $(k_{\mathrm{eff}}/k) \cdot \lambda D \approx 0.021$, 
giving $n_{\mathrm{crit}}^{\mathrm{corr}} \approx 3$.

Temporal correlation reduces the critical population from approximately 
10 individuals to approximately 3, making reliable surveillance of even 
small groups challenging.
\end{example}

\subsection{Connections to Multiple Testing}

This effective dimension reduction connects surveillance analysis to 
multiple-testing corrections in genomics~\cite{nyholt2004,liji2005,uffelmann2021} and 
neuroimaging~\cite{worsley1996,friston1994,nichols2003}, where similar correlation structures 
require Bonferroni-type corrections scaled by $k_{\mathrm{eff}}$ rather 
than the nominal number of tests $k$. The mathematics forces system 
designers to confront the limited information content of nominally 
``big'' data: more observations do not guarantee more information when 
those observations are highly correlated.

\begin{remark}[Overdispersion as System Weakness]
Correlation-induced overdispersion represents a fundamental vulnerability 
in screening systems. Positive correlation makes extreme events (high match 
counts) more probable than the independent Poisson model predicts, 
accelerating the onset of false-alert saturation. Critically, system 
designers cannot escape this by collecting more data: additional correlated 
observations provide diminishing marginal information while accumulating the 
same false-positive burden.
\end{remark}

Appendix~\ref{app:empirical} illustrates these correlation effects using the
City of Chicago Crime dataset, where strong spatial clustering reduces the
effective dimensionality by roughly two orders of magnitude.

\section{Empirical Illustrations}
\label{app:empirical}

The theoretical results in this paper are derived independently of any data.
The empirical examples below illustrate how the predicted behaviors appear in
realistic settings and help clarify the practical implications of the theory.
These illustrations do not affect the validity of the theoretical results but
provide concrete context for their interpretation.

Two public datasets are used throughout: one providing individual-level
attribute profiles, the other providing population-level event patterns across
space and time. Together, they illustrate the two components common to screening
problems: a collection of individuals described by multiple attributes, and a
collection of events observed across space and time. The mathematical structure
examined in this paper captures the behavior of comparisons between these
domains in the rare-event regime.

The datasets are analyzed independently; no cross-dataset linkage is performed.
The UCI Adult Census dataset \cite{dua2019} illustrates individual-level
phenomena (match distributions, group disparities, false alert concentration),
while the City of Chicago Crime dataset \cite{chicagocrime2024} illustrates
system-level dynamics (temporal saturation, spatial clustering, correlation
effects).

\subsection{A Screening System from Census Data}

To illustrate individual-level screening behavior, we constructed a system from
the UCI Adult dataset (30,162 individuals after removing incomplete records).
The system monitors $k = 15$ binary indicators representing potential ``flags''
across financial, occupational, educational, and demographic dimensions
(Table~\ref{tab:screening_attributes}). The 15 indicators were selected to span
a range of prevalences and correlations, mirroring the heterogeneous binary
features often used in real screening systems.

An individual is flagged when the number of matching attributes equals or
exceeds a threshold $m$. Because the UCI dataset contains no true targets,
every individual is treated as innocent, so all flags are interpreted as false
alerts in the sense of the theoretical model.

\begin{table}[htbp]
\caption{Screening attributes constructed from UCI Adult Census data.}
\label{tab:screening_attributes}
\centering
\begin{tabular}{llc}
\toprule
Category & Attribute & Prevalence \\
\midrule
Financial   & High capital gains ($>$\$5000)     & 5.2\% \\
            & Any capital loss                    & 4.7\% \\
            & High income ($>$\$50K)              & 24.9\% \\
\midrule
Work        & Self-employed                       & 11.8\% \\
            & Overtime hours ($>$50/week)         & 11.5\% \\
            & Part-time ($<$35 hours/week)        & 15.5\% \\
\midrule
Occupation  & Executive/Managerial                & 13.2\% \\
            & Professional specialty              & 13.4\% \\
            & Sales                               & 11.9\% \\
            & Tech support                        & 3.0\% \\
\midrule
Education   & Graduate degree                     & 8.4\% \\
            & Bachelor's degree                   & 16.7\% \\
\midrule
Demographic & Age 35--55                          & 45.8\% \\
            & Married                             & 47.9\% \\
            & Foreign-born                        & 8.8\% \\
\bottomrule
\end{tabular}
\end{table}

For each individual, we computed the number of matching attributes. The
resulting distribution has mean $\lambda = 2.43$ and variance $\sigma^2 = 2.95$,
yielding a variance-to-mean ratio of 1.22. Comparing to the Poisson$(\lambda)$
distribution gives total variation distance $d_{\mathrm{TV}} = 0.084$. This is
larger than Monte Carlo simulations of independent attributes
($d_{\mathrm{TV}} \approx 0.005$) but still indicates reasonable model fit; the
deviation reflects the correlation structure present in real demographic
data, the phenomenon analyzed in Section~\ref{sec:spatialtemporal}.

Table~\ref{tab:flagging_rates} reports the fraction of the population flagged at
each threshold.

\begin{table}[htbp]
\caption{False positive rates from UCI Adult data at each threshold.}
\label{tab:flagging_rates}
\centering
\begin{tabular}{ccc}
\toprule
Threshold $m$ & \% Population Flagged & Per Million \\
\midrule
3 & 41.7\% & 417,000 \\
4 & 25.2\% & 252,000 \\
5 & 13.8\% & 138,000 \\
6 & 5.9\%  & 59,000 \\
7 & 1.7\%  & 17,000 \\
8 & 0.31\% & 3,100 \\
\bottomrule
\end{tabular}
\end{table}

\subsection{Poisson Model Fit}

Table~\ref{tab:poisson_fit} compares empirical false alert rates from the UCI
Adult data against predictions from a Poisson$(\lambda)$ model with
$\lambda = 2.43$. The ratio of empirical to theoretical probabilities quantifies
model fit across thresholds.

\begin{table}[htbp]
\caption{Empirical versus Poisson-predicted false alert rates from UCI Adult
census data ($\lambda = 2.43$, $n = 30{,}162$).}
\label{tab:poisson_fit}
\centering
\begin{tabular}{cccc}
\toprule
Threshold $m$ & Empirical $q$ & Poisson $q$ & Ratio \\
\midrule
3 & 41.7\% & 43.8\% & 0.95 \\
4 & 25.2\% & 22.7\% & 1.11 \\
5 & 13.8\% & 10.0\% & 1.38 \\
6 & 5.9\%  & 3.7\%  & 1.56 \\
7 & 1.7\%  & 1.2\%  & 1.39 \\
8 & 0.3\%  & 0.4\%  & 0.87 \\
\bottomrule
\end{tabular}
\end{table}

At intermediate thresholds ($m = 4$--$7$), empirical rates exceed Poisson
predictions by a factor of 1.1--1.6, reflecting the overdispersion induced by
attribute correlation (variance-to-mean ratio $= 1.22$ versus $1.0$ for
Poisson). This is precisely the phenomenon analyzed in
Section~\ref{sec:spatialtemporal}: positive correlation inflates tail
probabilities, causing false alerts to occur more frequently than independence
would predict. The effect is most pronounced at moderate thresholds where the
tail is neither too heavy nor too light.

At extreme thresholds ($m = 3$ and $m = 8$), the ratio approaches unity,
indicating that the Poisson approximation remains reasonable for order-of-magnitude
estimates even when correlation is present. The total variation distance
$d_{\mathrm{TV}} = 0.084$ reported earlier reflects this moderate deviation.

\subsection{Group Disparities}

The UCI dataset permits analysis of screening outcomes across demographic
groups. Table~\ref{tab:group_exposure} reports the mean number of attribute
matches ($\lambda_g$) for race-by-sex groups with sample size $n_g \geq 100$.
Exposure varies by a factor of 2.5 across groups.

\begin{table}[htbp]
\caption{Group-level exposure in census data.}
\label{tab:group_exposure}
\centering
\begin{tabular}{lcc}
\toprule
Group & $n_g$ & $\lambda_g$ \\
\midrule
Asian-Pacific Islander, Male   & 601      & 3.68 \\
White, Male                    & 18,038 & 2.74 \\
Asian-Pacific Islander, Female & 294      & 2.69 \\
Black, Male                    & 1,418  & 1.91 \\
White, Female                  & 7,895  & 1.90 \\
Black, Female                  & 1,399  & 1.49 \\
\bottomrule
\end{tabular}
\end{table}

As Proposition~\ref{prop:exposure_amplification} predicts, the 2.5-fold exposure
difference translates to exponentially larger disparities in false alert rates
as the threshold increases (Table~\ref{tab:disparity_amplification}). Linear
regression of $\log(q_{\mathrm{high}}/q_{\mathrm{low}})$ on $m$ yields slope
$= 0.78$, $R^2 = 0.976$, $p = 0.0016$, confirming the exponential amplification
mechanism. The empirical slope is consistent with the theoretical rate function
$D(c\|1)$ governing exponential tail decay.

\begin{table}[htbp]
\caption{Disparity amplification: comparing highest-exposure group
(Asian-Pacific Islander Male, $\lambda = 3.68$) to lowest (Black Female,
$\lambda = 1.49$).}
\label{tab:disparity_amplification}
\centering
\begin{tabular}{ccccc}
\toprule
$m$ & $q_{\mathrm{high}}$ & $q_{\mathrm{low}}$ & Ratio & $\log(\text{Ratio})$ \\
\midrule
3 & 0.696 & 0.169 & 4.1   & 1.41 \\
4 & 0.513 & 0.059 & 8.7   & 2.17 \\
5 & 0.343 & 0.022 & 15.5  & 2.74 \\
6 & 0.195 & 0.006 & 30.3  & 3.41 \\
7 & 0.077 & 0.001 & 107.1 & 4.67 \\
\bottomrule
\end{tabular}
\end{table}

\subsection{Group Dominance}

Theorem~\ref{thm:dominance} predicts that system-level false alerts concentrate
in whichever group maximizes $n_g q_g$, and that this concentration intensifies
as thresholds increase. Table~\ref{tab:group_dominance} illustrates this pattern
clearly. At $m = 5$, alert shares roughly track population shares. By $m = 8$,
small high-exposure groups are dramatically over-represented while
low-exposure groups virtually disappear.

\begin{table}[htbp]
\caption{Group dominance intensifies with threshold. Asian-Pacific Islander
males (2\% of population, highest $\lambda_g$) account for 17\% of false alerts
at $m = 8$; Black females (4.6\% of population, lowest $\lambda_g$) account for
none.}
\label{tab:group_dominance}
\centering
\begin{tabular}{lccccc}
\toprule
Group & Pop.\ \% & $\lambda_g$ & \multicolumn{3}{c}{Alert share (\%)} \\
\cmidrule(lr){4-6}
      &          &             & $m=5$ & $m=7$ & $m=8$ \\
\midrule
Asian-Pac-Isl., Male   & 2.0  & 3.68 & 5.0  & 8.9  & \textbf{17.0} \\
White, Male            & 59.8 & 2.74 & 78.9 & 78.9 & 72.3 \\
White, Female          & 26.2 & 1.90 & 11.1 & 8.5  & 5.3  \\
Black, Male            & 4.7  & 1.91 & 2.5  & 1.2  & 1.1  \\
Black, Female          & 4.6  & 1.49 & 0.7  & 0.2  & \textbf{0.0}  \\
Other groups           & 2.7  & ---  & 1.8  & 2.3  & 4.3  \\
\bottomrule
\end{tabular}
\end{table}

The highest-exposure group (Asian-Pacific Islander males, $\lambda_g = 3.68$)
represents only 2\% of the population but accounts for 17\% of false alerts at
$m = 8$, an 8.5-fold amplification. Conversely, Black females
($\lambda_g = 1.49$) represent 4.6\% of the population yet generate effectively
zero false alerts at the same threshold.

This divergence reflects the exponential sensitivity of $q_g$:
groups with slightly larger $\lambda_g$ retain non-negligible tail probability
as $m$ increases, while groups with smaller $\lambda_g$ collapse exponentially.
At high thresholds, contributions to $n_g q_g$ no longer resemble population
shares; exposure, not size, determines system-wide false alerts, precisely as
predicted by Theorem~\ref{thm:dominance} and
Proposition~\ref{prop:exposure_amplification}.

\subsection{Temporal Saturation from Crime Data}

The Chicago Crime dataset (236,967 incidents across 50 wards over one year)
permits direct observation of temporal saturation. Table~\ref{tab:temporal_saturation}
tracks the fraction of wards exceeding fixed crime-count thresholds as weeks
accumulate. Thresholds were chosen to span the range of cumulative ward-level
crime counts (approximately 2,000--8,000 per ward annually).

Here the same symbol $m$ is used for the alert threshold, but the underlying
quantity is cumulative ward-level crime counts rather than attribute matches.
Accordingly, the numerical scale of $m$ is much larger, though its mathematical
role in determining flags is identical.

\begin{table}[htbp]
\caption{Temporal saturation: fraction of Chicago wards exceeding threshold.}
\label{tab:temporal_saturation}
\centering
\begin{tabular}{ccccc}
\toprule
Week & $m = 1000$ & $m = 2000$ & $m = 3000$ & $m = 4000$ \\
\midrule
4   & 0\%    & 0\%   & 0\%   & 0\%  \\
8   & 22\%   & 0\%   & 0\%   & 0\%  \\
12  & 38\%   & 4\%   & 0\%   & 0\%  \\
16  & 58\%   & 28\%  & 4\%   & 0\%  \\
24  & 98\%   & 38\%  & 28\%  & 6\%  \\
32  & 100\%  & 62\%  & 38\%  & 30\% \\
40  & 100\%  & 80\%  & 44\%  & 34\% \\
48  & 100\%  & 98\%  & 62\%  & 38\% \\
\bottomrule
\end{tabular}
\end{table}

Every threshold eventually approaches saturation: raising thresholds delays but
does not prevent the transition, consistent with Theorem~\ref{thm:system_lifetime}.
The time to 50\% saturation ($T_{50}$) scales approximately linearly with
threshold: $T_{50} \approx 7$ weeks for $m = 1000$, $\approx 14$ weeks for
$m = 1500$, $\approx 29$ weeks for $m = 2000$. Saturation here refers to
cumulative counts exceeding fixed thresholds, not an increase in crime rates;
even a stationary process saturates any fixed threshold given sufficient time.

\subsection{Correlation Structure}

Crime counts across Chicago's 50 wards exhibit strong spatial overdispersion:
mean $= 4{,}739$ crimes per ward, variance $= 5{,}996{,}000$, yielding
variance-to-mean ratio $= 1{,}265$. For Poisson-distributed data this ratio
equals 1; the observed value indicates substantial positive spatial correlation
(crime clustering), reducing effective dimensionality as analyzed in
Section~\ref{sec:spatialtemporal}.

Daily crime counts exhibit temporal autocorrelation with estimated correlation
time $\tau \approx 15$ days. By the Bartlett--Wilks formula, this implies
$k_{\mathrm{eff}} \approx 365/(2\tau) \approx 12$ effective independent
observations per year, a reduction factor of approximately 30. This reduction
in effective dimensionality directly illustrates the correlation correction
derived in Section~\ref{sec:spatialtemporal}.

\subsection{Summary}

Table~\ref{tab:empirical_summary} summarizes the correspondence between
theoretical predictions and empirical observations.

\begin{table}[htbp]
\caption{Summary of empirical illustrations.}
\label{tab:empirical_summary}
\centering
\begin{tabular}{lll}
\toprule
Phenomenon & Prediction & Observation \\
\midrule
Match distribution      & Poisson$(\lambda)$       & $d_{\mathrm{TV}} = 0.084$ \\
Disparity amplification & Exponential in $m$       & slope $= 0.78$, $p = 0.0016$ \\
Group dominance & Concentration intensifies & 2\% pop.\ $\to$ 17\% alerts \\
Temporal saturation     & All thresholds fail      & 100\% saturation observed \\
Spatial correlation     & Reduces $k_{\mathrm{eff}}$ & Var/Mean $= 1{,}265$ \\
Temporal correlation    & Reduces $k_{\mathrm{eff}}$ & $\tau \approx 15$ days \\
\bottomrule
\end{tabular}
\end{table}

These illustrations confirm that the theoretical framework describes behaviors
that arise naturally when screening mechanisms are applied to real data. The
phenomena are not artifacts of idealized assumptions.


\end{document}